\renewcommand{\sysname}{Matrix\xspace}
\begin{document}

\title{The Case for Accelerating BFT Protocols Using In-Network Ordering}

\author{Guangda Sun}
\author{Xin Zhe Khooi}
\author{Yunfan Li}
\author{Mingliang Jiang}
\author{Jialin Li}
\affil{National University of Singapore}

\maketitle

\begin{abstract}

Mission critical systems deployed in data centers today are facing more sophisticated failures. 
Byzantine fault tolerant (BFT) protocols are capable of masking these types of failures, but are rarely deployed due to their performance cost and complexity.
In this work, we propose a new approach to designing high performance BFT protocols in data centers. 
By re-examining the ordering responsibility between the network and the BFT protocol, we advocate a new abstraction offered by the data center network infrastructure. 
Concretely, we design a new authenticated ordered multicast primitive (\net) that provides transferable authentication and non-equivocation guarantees.
Feasibility of the design is demonstrated by two hardware implementations of \net -- one using HMAC and the other using public key cryptography for authentication -- on new-generation programmable switches.
We then co-design a new BFT protocol, \sysname, that leverages the guarantees of \net to eliminate cross-replica coordination and authentication in the common case. 
Evaluation results show that \sysname outperforms state-of-the-art protocols on both latency and throughput metrics by a wide margin, demonstrating the benefit of our new network ordering abstraction for BFT systems.

\end{abstract}

\section{Introduction}

Online services today are commonly deployed in large data centers and rely on fault-tolerance protocols to provide high availability in the presence of failures.
An important class of fault-tolerance protocols is state machine replication (SMR).
SMR protocols~\cite{paxos, pmmc, vr, zab, raft} have long been deployed in production systems~\cite{pml, chubby, zookeeper, spanner} to ensure a set of distributed nodes behaves like a single, always available state machine, despite failures of individual machines.
These protocols, however, can only tolerate node crash failures.
In reality, systems running in data centers are facing more sophisticated failures.
This is particularly relevant today, as permissioned blockchain systems~\cite{hyperledger, bidl, quorum} are increasingly being deployed in data centers for applications such as trading~\cite{sgx, asx}.
These systems require tolerance to adversarial nodes and attacks. 
Recent work~\cite{bidl} has shown that fault tolerance is becoming their main performance bottleneck.

Numerous Byzantine fault tolerant (BFT) protocols~\cite{pbft, zyzzyva, hotstuff, honeybadger, minbft, a2m, pompe} have been proposed to handle arbitrary node failures.
Their strong failure models, however, come with significant performance implications.
BFT protocols typically incur rounds of replica communication coupled with expensive cryptographic operations, resulting in low system throughput and high request latency.
To obtain higher throughput, many BFT protocols rely on heavy request batching which leads to long end-to-end decision latency -- often in the range of tens of milliseconds.
Unfortunately, such latency overheads are prohibitive for modern data center applications with stringent service-level objectives (SLOs).
Speculative BFT protocols such as Zyzzyva~\cite{zyzzyva} offer better commitment latency.
However, even a single faulty replica would eliminate the performance benefit of these latency-optimized protocols.

In this paper, we propose a new approach to building high-performance BFT protocols in data centers.
We observe that traditional BFT protocols are designed with minimum assumptions about the underlying network: The network only provides best effort message delivery.
The result of this weak network model is that application-level protocols are responsible for enforcing all correctness properties such as total ordering, durability, and authentication.
Our key insight is that strengthening the network model, in which the network can guarantee \textit{ordered message delivery}, can lead to reduced complexity and performance overhead of BFT protocols.
Prior work~\cite{nopaxos, eris} have shown promises of an in-network sequencing approach for crash fault-tolerant systems.
However, existing approaches fail to work once Byzantine failures are introduced: Faulty nodes can disseminate conflicting message orders, and the network sequencer may equivocate by assigning different sequence numbers to each replica.
In this work, we propose a new network-level primitive, \net, that addresses the above challenges.
\net ensures that correct receivers always deliver multicast messages in the same order, even in the presence of Byzantine participants.
The key property offered by \net is \textit{transferable in-network authentication} -- receivers can verify that a multicast message is properly delivered by \net, and they can prove authenticity of the message to other receivers in the system.
We additionally propose a mixed failure model~\cite{xft, hybrid-model} in which the network infrastructure can be either crash- or Byzantine-faulty.
For deployments that trust the network infrastructure, \net provides ordering guarantees with minimum network-level overhead.
For systems that need to tolerate Byzantine network devices, \net uses a simple round of cross-receiver communication to handle equivocating sequencers.

We demonstrate the feasibility of our approach by implementing our \net primitive in off-the-shelf programmable switches.
The switch data plane performs both \textit{sequencing} and \textit{authentication} for \net messages.
While packet sequencing is relatively straightforward, generating secure authentication codes -- a complex mathematical procedure -- is a major challenge given the switch's limited resources and computational constraints.
We propose two implementations of in-switch message authentication, each with trade-offs among switch resource utilization, performance, and scalability.
The first variant implements message authentication code (HMAC) vectors in the switch ASICs using SipHash~\cite{siphash} as the hashing algorithm.
The second variant generates signatures using public-key cryptography.
Due to hardware constraints, direct in-switch implementation of cryptographic algorithms such as RSA~\cite{rsa} and ECDSA~\cite{ecdsa} remains infeasible.
We design a new heterogeneous switch architecture that tightly couples FPGA-based cryptographic accelerators to the switch pipelines.
Our design enables efficient in-network processing and signing of \net messages, scales linearly with the number of cryptographic accelerators attached, and requires minimum hardware resources on the switch data plane.

Leveraging the strong properties provided by \net, we co-design a new BFT protocol, \sysname.
In the common case, \sysname replicas rely on the ordering guarantees of \net to commit client requests in a \textit{single round trip}, eliminating all cross replica communication and authentication.
Moreover, \sysname stays in this fast path protocol even in the presence of (up to $f$) faulty replicas, while requiring the theoretical minimum replication factor ($3f+1$).
In the rare case of network failures, we design efficient protocols to handle packet drops and faulty switch sequencers while guaranteeing correctness.
By evaluating against state-of-the-art BFT protocols, we show that \sysname can improve protocol throughput by up to 3.4$\times$ and end-to-end latency by 42$\times$, demonstrating the benefit of our authenticated in-network ordering approach for BFT systems.

\section{Background}
\label{sec:background}

In this section, we give an overview of state-of-the-arts BFT protocols.
We then review recent proposals that use in-network ordering to accelerate crash fault-tolerant systems.


\begin{table*}[!ht]
\centering
\small
\begin{tabular}{@{}lcccccccc@{}}
\toprule
 & & PBFT~\cite{pbft} & Zyzzyva~\cite{zyzzyva} & SBFT~\cite{sbft} & HotStuff~\cite{hotstuff} & A2M~\cite{a2m} & MinBFT~\cite{minbft} & \sysname \\ 
\midrule
Replication Factor & & $3f+1$ & $3f+1$ & $3f+1$ & $3f+1$ & $2f+1$ & $2f+1$ & $3f+1$ \\
Bottleneck Complexity & & $O(N)$ & $O(N)$ & $O(N)$ & $O(N)$ & $O(N)$ & $O(N)$ & $O(1)$ \\ 
Authenticator Complexity & & $O(N^2)$ & $O(N)$ & $O(N)$ & $O(N)$ & $O(N^2)$ & $O(N^2)$ & $O(N)$ \\ 
Message Delay & & 5 & 3 & 6 & 4 & 5 & 4 & 2 \\
\bottomrule
\end{tabular}
\caption{
Comparison of \sysname to state-of-the-art BFT protocols.
Here, \textit{bottleneck complexity} denotes the number of messages the bottleneck replica needs to process; \textit{authenticator complexity} shows the total number of signatures processed by all replicas.
 }
\label{tab:bft-protocols}
\end{table*}

\subsection{State-of-the-Art BFT Protocols}
\label{sec:background:protocols}


There has been a long line of work on BFT state machine replication (SMR) protocols.
\autoref{tab:bft-protocols} presents a summary of the properties and comparison of some recent representative protocols.
PBFT~\cite{pbft} is the first practical BFT protocol that tolerates up to $f$ Byzantine nodes, requiring at least $3f + 1$ replicas which has been shown to be theoretical lower bound~\cite{async-consensus}.
In PBFT, client requests are committed in five message delays:
clients send requests to a primary replica; 
the primary replica sequences and forwards the requests to the backup replicas;
backup replicas authenticate the requests and broadcast their acceptance;
after a replica receives quorum acceptance, it broadcasts a commit decision;
replicas execute the request and reply to the client once they collect quorum commit decisions.
As replicas exchange messages in an all-to-all fashion, each replica processes $O(N)$ messages, making the authenticator complexity $O(N^2)$.

Zyzzyva~\cite{zyzzyva} speculatively executes client requests before they are committed to reduce the communication overhead.
The protocol includes a fast path with three message delays when clients receiving matching replies from \textit{all} replicas, and otherwise a slow path with at least five message delays.
The primary replica in Zyzzyva still sends signed messages to all back up replicas ($O(N)$), but with all-to-all communication removed, the authenticator complexity is reduced to $O(N)$.
Replicas in Zyzzyva may need to rollback speculatively executed operations during view changes.



Unlike Zyzzyva which pushes the responsibility of collecting authenticators to the client, SBFT~\cite{sbft} uses round-robin message collector among all replicas to remove all-to-all communication, similarly reducing authenticator complexity to $O(N)$.
SBFT also leverages threshold signatures to reduce message size, and simultaneously decreases the number of client replies to one per decision.


Many BFT protocols (\cite{pbft, zyzzyva, sbft}) use an expensive view change protocol to handle primary leader failure.
The standard view change protocol used in PBFT requires $O(N^3)$ message authenticators, limiting the overall scalability.
HotStuff~\cite{hotstuff} addresses this issue by adding an extra phase in normal operation. 
This design reduces the authenticator complexity of the leader failure protocol to $O(N)$, matching that of the normal case protocol.
In return, HotStuff adds a one-way network latency to the request commit delay.



\paragraph{BFT with trusted components.}
To reduce protocol complexity, a recent line of work~\cite{trinc,minbft,mft,ttcb,a2m} introduces \textit{trusted components} on each replica.
These trusted components can be implemented in a Trusted Platform Module (TPM)~\cite{tpm} or run in a trusted hypervisor, and are assumed to always behave correctly even if residing on Byzantine nodes.

A2M-PBFT-EA~\cite{a2m} uses an \textit{attested append-only memory} (A2M) to securely store operations as entries in a log.
Each A2M log entry is associated with a monotonically increasing, gap-less sequence number.
Once appended, a log entry becomes immutable, and its content can be attested by any node in the system.
A2M enables replicas to eliminate equivocation, and thus, reduce the replication factor to $2f+1$.
However, the protocol still incurs the same bottleneck complexity, authenticator complexity, and message latency as PBFT.
A later work, TrInc~\cite{trinc}, reduces the trusted component to a single counter, and leaves the log in untrusted memory.


MinBFT~\cite{minbft} features a message-based trusted primitive called Unique Sequential Identifier Generator (USIG). 
For each input message, USIG generates a unique identifier that is monotonic, sequential, and verifiable.
By authenticating USIG identifiers, MinBFT replicas can validate that all other replicas have received the same messages in the same order.
This property allows the protocol to reduce the message delay to four.
Unfortunately, MinBFT's authenticator complexity remains at $O(N^2)$.


\subsection{In-Network Ordering for CFT Protocols}

Another recent line of work~\cite{nopaxos, eris, specpaxos} proposes a new approach to designing crash fault tolerant (CFT) SMR protocols.
These systems move the responsibility of request ordering to the data center network, so application-level protocols only ensure durability of client operations.
This network co-design approach improves SMR protocol performance by reducing coordination overhead among servers to commit an operation.
For instance, NOPaxos~\cite{nopaxos} dedicates a programmable switch in the network as a sequencer.
The switch stamps sequence numbers to each NOPaxos request, and all replicas execute requests in the same sequence number order.
However, all these solutions only target CFT protocols and they fail to work if participants can exhibit Byzantine faults, e.g., a Byzantine node can impersonate the sequencer and broadcast conflicting message orders.

\section{Authenticated In-Network Ordering}
\label{sec:network}

The core of a BFT SMR protocol is to establish a consistent order of requests even in the presence of failures.
Traditionally, this task is accomplished by explicit communication among the replicas, usually coordinated by a leader.
In this work, we argue for a new approach to improving the efficiency of BFT protocols, one which the responsibility of request ordering is moved to the network infrastructure.

\subsection{The Case for an Authenticated Ordering Service in Data Center Networks}
\label{sec:network:case}

To guarantee linearizability~\cite{linearizability}, BFT SMR protocols require all non-faulty replicas to execute client requests in the same order.
Due to the best-effort assumption about the network, an application-level protocol is fully responsible for establishing a total order of requests among the replicas.
Take PBFT~\cite{pbft} as an example:
The primary replica first assigns an order to client requests before broadcasting to backup replicas.
All replicas then use two rounds of communication to agree on this ordering while tolerating faulty participants.
As discussed in \cref{sec:background}, even adding trusted components to each replica does not alleviate the coordination and authentication overhead in BFT protocols:
replicas still require remote attestations to verify the received messages.

What if the underlying network can offer stronger guarantees?
Prior work~\cite{specpaxos, nopaxos, eris} have already demonstrated that in-network ordering, realized using network programmability~\cite{p4, tofino}, can offer compelling performance benefits to crash fault tolerant SMR protocols.
In this work, we argue that BFT protocols can similarly benefit from moving the ordering responsibility to the network.
At a high level, by offloading request ordering to the network, BFT replicas avoid explicit communication to establish an execution order, effectively reducing cross-replica coordination and authentication overhead.
This network ordering approach improves both protocol throughput -- less work is performed on each replica -- and latency -- fewer message delays are required to commit a request.


However, achieving message ordering in the network is more challenging in a BFT setting.
In this section, we discuss the main challenges faced by a BFT-based network ordering primitive and our approach to address them.

\paragraph{Why authenticated ordering in the network?}
In prior network ordering systems, the network primitive (such as the Ordered Unreliable Multicast in NOPaxos~\cite{nopaxos} and the multi-sequenced groupcast in Eris~\cite{eris}) is fully responsibility for assigning an order to client requests.
With non-Byzantine participants, this network-level ordering is the only request order observed by any replica.
Unfortunately, in a BFT deployment model, a faulty node can easily impersonate the network primitive and assign a conflicting message order, violating the ordering guarantee of the network layer.
To tolerate equivocating Byzantine participants, we augment the network primitive to provide an \textit{authentication} property:
Non-faulty replicas can independently validate that the received message order is indeed established by the network, not by any faulty node.
We elaborate on how such authentication can be implemented efficiently on commodity switch hardware in \cref{sec:design}.

\paragraph{Hybrid fault model and Byzantine network.}
If the network itself is Byzantine, it can equivocate by assigning different message orders to different replicas, violating the ordering guarantee.
In this work, we argue for a \textit{dual fault model} -- one fault model for the participating nodes and one for the network.
Our argument is inspired by prior work that proposes hybrid fault model~\cite{hybrid-model} and work~\cite{xft} that separates machine faults from network faults.
Specifically, we always assume a Byzantine failure mode for end-hosts.
The network infrastructure, on the other hand, can either be crash-faulty or Byzantine-faulty.
The benefit of our approach is that we offer flexibility with an explicit trade-off between fault tolerance and performance.
For deployments that trust the network to only exhibit crash and omission faults, our solution offers the optimal performance;
if the deployment assumes the network infrastructure can behave arbitrarily, we offer solution that tolerates Byzantine faults in the network, albeit taking a small performance penalty.

We further argue that a hybrid fault model, in which the network is assumed to be crash-faulty, is a practical option for many systems deployed in data centers.
Networking hardware presents a smaller attack surface and is less vulnerable to bugs compared to software-based components.
They are single application ASICs without sophisticated system software, and formal verification of their hardware designs are common practice.
Systems running in data centers also inherently place some trust in the data center hardware infrastructure.
It is also against the economic interest of data center operators to violate the trust of their provided services.

Our model resembles existing deployment options in the public cloud:
Only deployments that do not trust the cloud infrastructure run their virtual machines on instances with a Trusted Execution Environment (TEE) such as Intel SGX.
The majority of use cases place trust on cloud hardware and hypervisors; in return, they attain higher performance compared to their TEE counterpart.

\subsection{Authenticated Ordered Multicast}
\label{sec:network:tom}

We have so far argued for an authenticated ordering service in the network for BFT protocols.
To that end, we propose a new Authenticated Ordered Multicast (\net) primitive as a concrete instance of such model.
Similar to other multicast primitives (e.g., IP multicast), an \net deployment consists of one or multiple \net groups, each identified by a unique group address.
\net receivers can join and leave a \net group by contacting a membership service.
A sender sends an \net message to one \net group, and the network is responsible for routing the message to all group receivers.
Note that senders do not know the identity nor the address of individual receivers; they only specify the group address as the destination.

Unlike traditional best-effort IP multicast, \net provides a set of stronger guarantees, which we formally define here:

\begin{itemize}[leftmargin=*]
    \item \textbf{Asynchrony.} There is no bound on the delivery latency of \net messages.
    \item \textbf{Unreliability.} There is no guarantee that an \net message will be received by any receiver in the destination group.
    \item \textbf{Authentication.} A receiver can verify the authenticity of an \net message, i.e., the message is correctly processed by the \net network primitive.
    A correct receiver only delivers authentic \net messages.
    \item \textbf{Transferable Authentication.} If a receiver $r_1$ forwards an \net message to another receiver $r_2$, $r_2$ can independently verify the authenticity of the message.
    \item \textbf{Ordering.} For any two authentic \net messages $m_1$ and $m_2$ that destined to the same \net group $G$, all correct receivers in $G$ that receive both $m_1$ and $m_2$ deliver them in the same order.
    \item \textbf{Drop Detection.} Receivers can detect \net message drops in the network and deliver \drop. Formally, for any authentic \net message $m$, either 1) all correct receivers in the destination group $G$ delivers $m$ or a \drop for $m$ before delivering the next \tom message, or 2) none of the correct receivers in $G$ delivers $m$ or a \drop for $m$.
\end{itemize}

A key differentiating property of \net is the capability of receivers to independently verify the \textit{authenticity} of an \net message.
Authenticity in our case does not refer to the identity of the sender, which still requires end-to-end cryptography.
Instead, receivers can verify that a message is correctly processed by the \net primitive, and the ordering is not forged by other participant in the system.
The authentication capability is also transferable: an \net message can be relayed to any other receiver in the group, and they can independently verify the authenticity of the message.

Another important guarantee of \net is \textit{non-equivocation}, i.e., the primitive never delivers conflicting order of messages to non-faulty receivers in a group.
As we show in \cref{sec:design}, if the network is assumed to be Byzantine-faulty, the primitive requires additional confirmation among the group receivers to guarantee non-equivocation.
\net, however, does not guarantee reliable transmission.
The implication is that an \net message may be delivered by only a subset of the receivers in a group.
\net ensures that receivers who miss the message deliver a \drop.
The application-level BFT protocol is then responsible for reaching consensus on the fate of those dropped messages.

\section{Design and Implementation of \net}
\label{sec:design}

In this section, we detail our design of the proposed \net network primitive on reconfigurable switches.
Even though ordering and drop detection can be realized using an in-switch sequencing design (\cref{sec:design:order}), achieving transferable authentication presents major technical challenges.
Here, we present two switch designs that overcome these challenges. 
The first design implements hash-based message authentication code (HMAC) vectors directly in the switch ASICs (\cref{sec:design:hmac}).
The second design proposes a new switch architecture that integrates cryptographic accelerators with the switch pipelines.
This architecture enables us to sign \net messages using public key cryptography in the network (\cref{sec:design:crypto}).

\subsection{Design Overview}
\label{sec:design:overview}

Our \net primitive design consists of three major components:
a network-wide configuration service, a programmable network data plane, and an application-level library running on \net senders and receivers.
Similar to IP multicast, receivers create and and join an \net group by contacting the configuration service through secure TLS channels.
The configuration service selects one programmable switch in the network as a \textit{sequencer} for the group, and instructs the sequencer switch to broadcast routing advertisement for the group address (e.g., using BGP).
Once the advertisement is propagated, \net messages destined to the group address will be forwarded to the sequencer switch.
The switch implements sequencing (\cref{sec:design:order}) and authentication (\cref{sec:design:hmac}, \cref{sec:design:crypto}) of \net packets in the data plane using the P4 language~\cite{p4}.
When sending an \net packet, the sender-side library generates a custom packet header (after the UDP header) that contains the \textit{group ID}, a \textit{sequence number}, an \textit{epoch number}, and an \textit{authenticator}.
Except the group ID, all remaining fields in the custom header are filled by the sequencer switch.
The receiver-side library verifies the authenticator and delivers \net messages in sequence number order.
For any gap in the number sequence, the receiver delivers a \drop.
For deployments that assume a Byzantine-faulty network, the receiver-side library additionally exchanges \confirm messages with other receivers in the group to tolerate sequencer equivocation (\cref{sec:design:order}).

\subsection{Message Ordering, Drop Detection, and Failure Handling}
\label{sec:design:order}

To establish a consistent ordering of \net messages, we leverage programmable switches to stamp monotonically increasing, gap-less sequence numbers to each \net packet.
The sequencer switch maintains a counter in a data plane register for each \net group.
When processing a \net packet, the switch uses the destination group address to locate the group counter register, increments the counter, and writes the counter value into the message header.
After the switch generates an authenticator for the packet, it uses its replication engine to multicast the stamped \net message to all \net receivers in the group.
As discussed in \cref{sec:design:overview}, receivers deliver authenticated \net messages in sequence number order.

Our sequencing approach enables \textit{drop detection}: When a receiver notices a gap in the message number sequence -- indicating some \net messages have been dropped in the network -- it delivers \drop{}s to the application.
This naive approach still works when messages are reordered in the network: A receiver simply ignores out-of-order messages.
However, it can lead to more frequent \drop{}s when packet reordering rate is high.
As an optimization, receivers buffer out-of-order \net messages and only deliver \drop after a timeout.

\paragraph{Tolerating Byzantine-faulty network.}
If the network infrastructure is non-Byzantine (\cref{sec:network:case}), a receiver can directly deliver authenticated \net messages in the sequence number order.
This delivery rule satisfies our \textit{ordering} property: Any two receivers are guaranteed to receive identical messages for each sequence number.
If a Byzantine-faulty network is assumed, the sequencer may equivocate by sending different request ordering to each receiver.
To tolerate sequencer equivocation, for each received \net message, a receiver broadcasts a signed confirmation that includes the sequence number and the message hash.
It ignores subsequent \net messages with the same sequence number.
A receiver only delivers an \net message after it collects enough matching confirmations (at least $2f+1$ where $f$ is the number of faulty receivers) for the message.
This strengthened delivery rules ensures our \textit{ordering} property in a Byzantine-faulty network, since no two non-faulty replicas can deliver distinct \net messages for the same sequence number.
When delivering, the receiver-side library delivers both the \net message and the collection of confirmation messages.
This entire message set serves as an \textit{ordering certificate} -- it can be used to prove the correct ordering of the associated \net message.
On the other hand, a single \net message alone can serve as the ordering certificate in a crash-faulty network.

\paragraph{Sequencer switch failover.}
The above scheme only ensures message ordering and drop detection when there is a single switch sequencing \net messages.
To tolerate faulty switches, \net receivers request the configuration service to fail over to a different sequencer for the group.
However, receivers in a group may not deliver the same set of messages from the failed sequencer when the new sequencer switch starts.
Furthermore, the previous sequencer may only suffer a transient fault; it may continue to process \net messages, concurrently with the new sequencer.
To properly handle a switch failover, the application-level protocol is responsible for reaching consensus on the set of messages delivered by the failed sequencer (\cref{sec:protocol:view-change}).
Once an agreement is reached, the receivers ask the configuration service to choose a new sequencer switch, and exchange the necessary authentication keys.
They then start delivering \net messages and \drop{}s for the new sequencer switch, ignoring messages from the old switch.

\subsection{HMAC-Based In-Network Authentication}
\label{sec:design:hmac}

Compared to message sequencing, generating \textit{secure} and \textit{transferable} authentication tokens for messages directly in network hardware raises bigger technical challenges.
Common authentication mechanisms such as Hash-based Message Authentication (HMAC) and public key cryptography involve long sequence of complex mathematical operations, which are challenging tasks for the resource-constrained network devices.
In this work, we demonstrate the feasibility of implementing in-network authentication on current generation programmable switches.

Our first design uses HMAC vector as the authentication token.
Specifically, when a receiver joins a \net group, it uses a key exchange protocol~\cite{keyexchange} to share a secret key with the current sequencer switch, facilitated by the configuration service.
The switch control plane installs the secret key of each receiver in the data plane.
When sending \net messages, the sender generates a digest of the message using a collision-resistant hash functions~\cite{md5}, and adds the digest to the message header.
The switch concatenates the stamped sequence number~\autoref{sec:design:order} and the message digest, and generates a vector of HMAC hashes, one for each receiver.
For each hash, the switch uses the digest of the message, the sequencer number, the secret key of the receiver, and a cryptographic hash function as the input.
It then writes the entire HMAC vector into the message header.
A receiver authenticates a \net message by checking if a locally computed HMAC hash matches the corresponding entry in the received HMAC vector.

\paragraph{In-switch HMAC implementation.}
Implementing an unforgeable HMAC requires access to collision resistant cryptographic hash functions.
Common options such as MD5~\cite{md5} and SHA-2~\cite{sha2} either require modular arithmetic or rounds of sequential heavy computations, making them impractical to implement on programmable switches.
Recent advancements -- HalfSipHash~\cite{2021-SPIN-SipHash} and P4-AES~\cite{2020-SPIN-AES}\footnote{We extend P4-AES to implement AES-CMAC~\cite{rfc4493}. However, for brevity yet without the loss of generality, we only focus on HalfSipHash for the remainder of the discussion.} -- demonstrate the feasibility of switch implementation of cryptographic hash functions with throughput up to $\sim$150 million hashes per second~\cite{2021-SPIN-SipHash}.
We therefore leverage HalfSipHash as a building block for our in-switch HMAC design. 

A cryptographic hash function, unfortunately, only solves half of the equation.
Implementing a HMAC vector in the switch data plane still requires us to overcome several technical challenges.
Firstly, HalfSipHash consumes non-negligible switch hardware resources (e.g., using all 12 pipeline stages), making parallel generation of HMAC vector difficult.
With careful optimizations, we managed to fit four parallel instances of HalfSipHash within a single switching pipeline.
We explicitly trade-off the number of pipeline passes for HMAC computation to achieve higher degree of parallelism, while maintaining high throughput (\cref{sec:eval:micro}).

Second, there exists data dependencies between HMAC computation and other packet processing logic such as sequencing. 
A naive combination of the logic would result in a dependency chain exceeding the hardware limit~\cite{2021-NetSoft-dSketch}.
To address the issue, we perform pipeline-folding (\autoref{fig:hmac}) to extend our computation beyond the available pipeline stages.
Furthermore, our approach decouples HMAC vector computation from other packet processing logic, resulting in a simpler and more modular design.

Lastly, the number of HMAC instances grows \textit{linearly} with the multicast group size.
By dedicating a pipeline for HMAC vector computation, our design can scale up by configuring additional ports in the HMAC pipeline as loopback ports.
We arrange receivers into logical groups of four, and perform HMAC computations for the different groups in parallel.
Receiver scalability is then only bounded by the configurable loopback ports (up to 72 receivers in our current design).

Putting things together, we illustrate our design in~\autoref{fig:hmac}.
We dedicate switch \textsc{pipe 1} for HMAC vector computation, while the senders and receivers are connected to \textsc{pipe 0}.
When an \net packet enters ingress of \textsc{pipe 0}, the switch sequences the packet and performs the necessary pre-processing (e.g., header initialization) for HMAC computation.
The packet is then multicasted to the dedicated loopback ports to perform HMAC computations in parallel.
Each HMAC computation requires 12 pipeline passes in \textsc{pipe 1}.
The resulting packet is then multicasted and forwarded to all receivers after post-processing (e.g., copying the HMAC result) at \textsc{pipe 0} egress. 

\begin{figure*}[t!]
\begin{subfigure}[b]{0.3\textwidth}
\includegraphics[width=\linewidth, height=2.5cm]{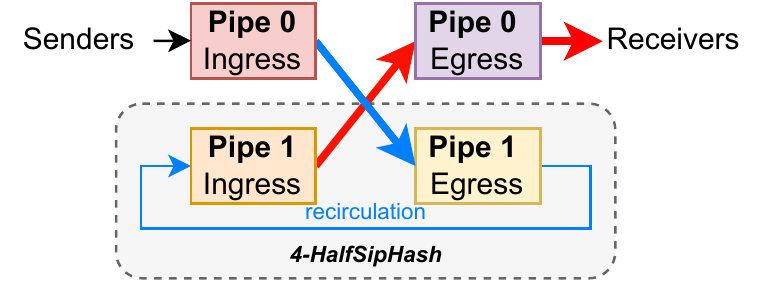}
\caption{Switch HMAC architecture}
\label{fig:hmac}
\end{subfigure}
\begin{subfigure}[b]{0.3\textwidth}
\centering
\includegraphics[width=\linewidth, height=2.5cm]{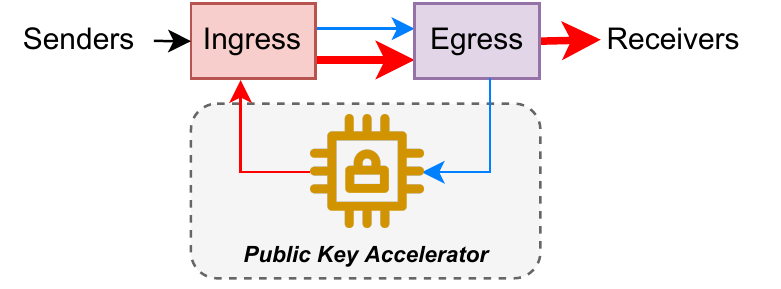}
\caption{Switch public key crypto architecture}
\label{fig:pkey}
\end{subfigure}
\begin{subfigure}[b]{0.3\textwidth}
\includegraphics[width=\linewidth]{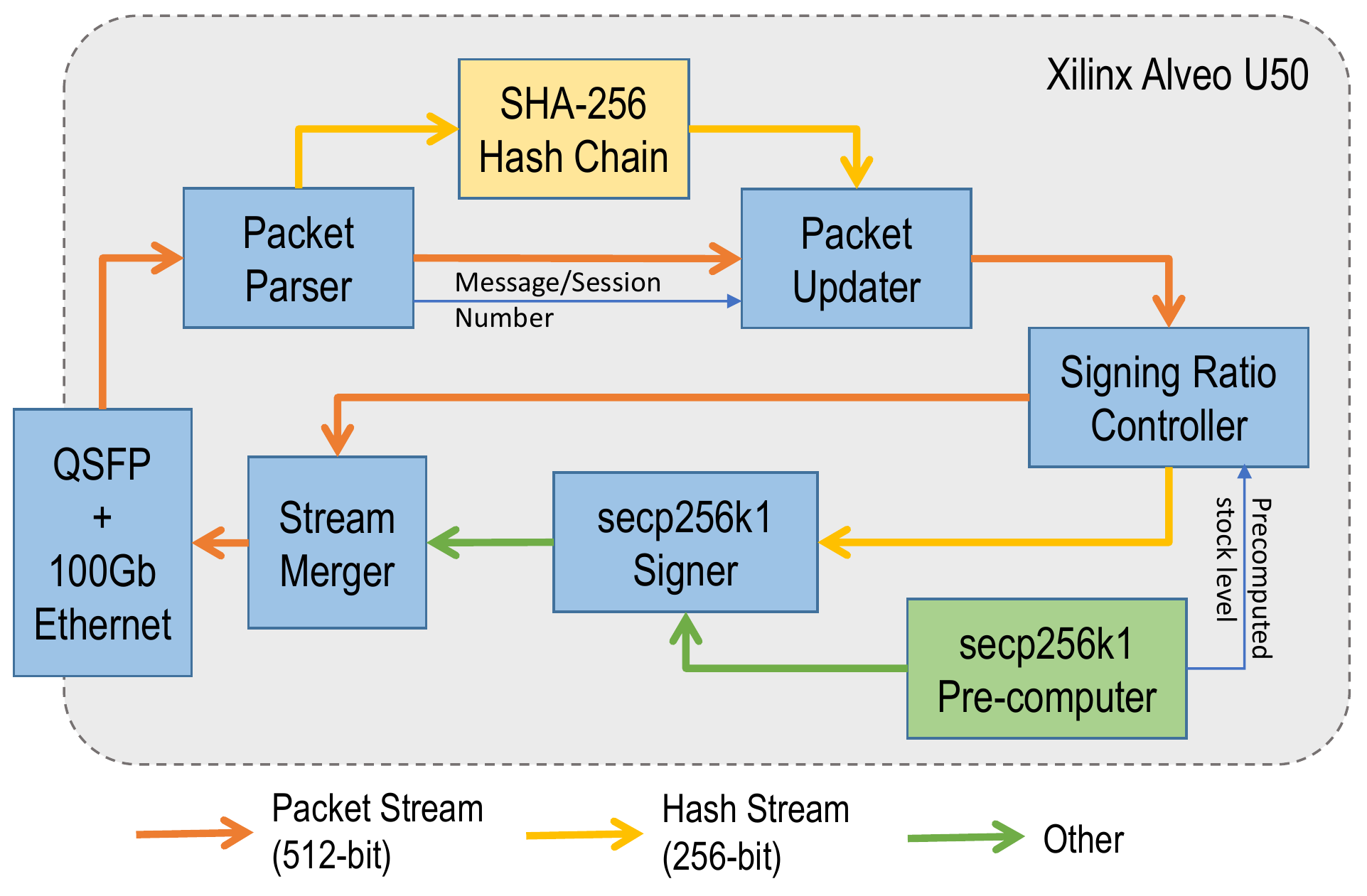}
\caption{FPGA crypto accelerator design}
\label{fig:fpga-arch}
\end{subfigure}
\caption{\autoref{fig:hmac} shows the folded pipeline design for implementing HMAC on a switch.~\autoref{fig:pkey} illustrates our new switch architecture that tightly couples a crypto accelerator with the switch pipeline. \textcolor{blue}{Blue arrows} denote unsigned \net packets whereas the \textcolor{red}{red arrows} represents signed \net packets.  
The \textbf{thick arrows} refers to multicast.
\autoref{fig:fpga-arch} shows the FPGA hardware design of our crypto accelerator.
}
\end{figure*}

\subsection{In-Network Public-Key Cryptography}
\label{sec:design:crypto}

One issue with our HMAC vector design is scalability:
the number of HMAC instances and the vector size increases linearly with the multicast group size.
The amount of switch resources and the packet header size thus bound the deployment scale.
For deployments that demand bigger group sizes, we propose a second design that implements message authentication using public-key cryptography.
Concretely, each sequencer switch maintains a secret private key, and a public key which it shares with the configuration service.
The service forwards the switch's public key to a receiver when it joins the \net group.
When processing \net messages, the switch produces a digital signature on the concatenated message digest and sequence number using a public key algorithm~\cite{ecdsa} and its private key.
It then writes the signature to the message header, before multicasting to all group receivers.
Receivers use the switch public key to verify the authenticity of the message.
Our in-network public-key approach is agnostic to the number of receivers: the switch only maintains one private key, and it writes a \textit{single} signature to an \net message, regardless of the number of receivers.


\paragraph{In-network cryptography design.}
Implementing public-key cryptography in a network switch is a daunting task.
The RSA~\cite{rsa} public-key algorithm requires modular exponentiation of large prime numbers.
Even with aggressive optimizations, calculating a signature still involves unbounded loops of multiplications and modulo operations.
The ECDSA~\cite{ecdsa} algorithm involves similar complexity, and additionally requires random number generations and multiplicative inverses.
None of these operations are supported on current generation programmable switches.
Due to strict timing, power, and resource constraints, future programmable switches are unlikely to add support for these computations.

To circumvent the limitations of existing switches, we propose a new switch architecture that adds a specialized cryptographic co-processor alongside the main switching chip, as shown in~\autoref{fig:pkey}.
The co-processor includes a simple processing element, dedicated fast memory, and one or more cryptographic accelerators.
It is connected to the switching chip through high-speed PCIe, interconnect (such as QuickPath Interconnect), or internal network links.
This architecture is highly feasible: PCIe links between the switching chip and the control plane CPU is commonplace in off-the-shelf switches, and switch vendors have roll-outed commercial products that couple powerful FPGAs with the switch ASIC~\cite{apsswitch}.
When the switch determines that a packet requires a cryptography operation, it writes the operation input (e.g., operation type and crypto key identifier) into pipeline metadata.
After egress pipeline processing, the switch submits both the metadata and the packet to the co-processor who performs the crypto operation.
The co-processor then writes the result into the header, and sends both the packet and the metadata back to the switch for further processing.

\paragraph{Cryptographic accelerator implementation.}

We implement an accelerator prototype that performs \net signature signing on a Xilinx Alveo U50 FPGA card~\cite{alveou50}.
\autoref{fig:fpga-arch} shows the high-level architecture of our hardware design.
The card is connected to one of the switch ports through a 100Gbps QSFP28 cable.
After an incoming packet is parsed by a parser module, a SHA-256~\cite{sha2} hashing module calculates a hash of the concatenated message digest and the stamped sequence number.
A signing module then generates a signature of the hash using the \texttt{secp256k1} elliptic curve~\cite{ecdsa}.
To reduce signing latency, part of the computation is done by a pre-compute module before the hash is received by the signer.
Finally, a stream merger module sends the signed packet back to the QSFP28 port.
All of the hardware modules, except the Xilinx QSFP28 hard IP, are developed in-house using a mixture of RTL and HLS.

To handle message rate higher than the signature generation throughput, we apply a hash chaining technique. 
Each \net packet is additionally stamped with a hash of the preceding packet in the number sequence.
For packets that do not have a signature, receivers wait until the next signed packet, and verify the entire batch by checking the chain in the reverse order.
This chaining optimization is implemented in the hardware hashing module.
Additionally, we design a signing ratio controller to govern the frequency of signature generation based on traffic rate and signing throughput.

\section{The \sysname Protocol}
\label{sec:protocol}

Leveraging the authenticated ordering guarantee provided by our \net network primitive, we co-designed a new BFT protocol, \sysname, that commits client operations in a \textit{single RTT}, even in the presence of Byzantine replicas.
In the rare case of packet drops in the network, \sysname takes advantages of \net{}'s drop detection to efficiently recover from missing operations.
To tolerate faulty replicas and network sequencers, \sysname runs a unified view change protocol to replace the faulty participants.

\subsection{System Model}

We assumes a Byzantine failure model for clients and replicas of the protocol. 
Each \sysname instance is assigned a unique \net group address. 
When a replica joins a \sysname instance, it registers as a group receiver of the corresponding \net group.
Besides regular asynchronous unicast messages, nodes in the system can send \net messages to a \sysname group.
\net messages follow the properties as defined in \cref{sec:network:tom}.
We make standard cryptography assumptions: Nodes do not have enough computational resources to subvert the cryptographic hash functions, message authentication codes, and public-key cryptography algorithms we used in the protocol.
We also assume a strong adversary model: Byzantine nodes can collude with each other, but they cannot delay correct nodes indefinitely.

\sysname is a state machine replication~\cite{smr} protocol.
We assume all operations executed by the protocol are deterministic.
With less than $\lfloor \frac{n - 1}{3} \rfloor$ (Byzantine) faulty replicas in the system (where $n$ is the total number of replicas), \sysname guarantees linearizability~\cite{linearizability} of client operations.
Due to the impossibility of asynchronous consensus~\cite{flp}, \sysname only ensures liveness when the system is in period of synchrony.

\subsection{Protocol Overview}
\label{sec:proto:overview}

\sysname relies on the guarantees provided by the \net network primitive to achieve single RTT commitment in the \textit{common case}.
Specifically, clients multicast requests to \sysname replicas using \net.
In the absence of network-level anomalies (e.g., message drops and switch failures), all replicas delivers \net messages in the exact same order.
Crucially, such guarantee implies that replicas require no explicit communication to agree on the order of messages.
\sysname thus avoids the expensive cross-replica coordination and server signature signing/verification required by other BFT protocols.
Moreover, adversaries can not temper with the order of messages nor their content, as correct replicas can independently verify the authenticity and integrity of each \net message.
Once an \net message is delivered, replicas can immediately execute the request and respond to the client, resulting in a single phase fast path protocol.
As discussed in \cref{sec:design:order}, when delivering messages, the network primitive provides an \textit{ordering certificate}: the single \net message for a crash-faulty network, or the additional $2f+1$ confirmations for a Byzantine-faulty network.
Similar to previous speculative protocols~\cite{zyzzyva, nopaxos, specpaxos}, \sysname relies on clients to confirm operation durability.
However, since all correct replicas already established a total order of operations, \sysname does not require extra protocols to handle faulty replicas (Zyzzyva~\cite{zyzzyva}) or state divergence due to out-of-order speculative executions (Speculative Paxos~\cite{specpaxos}).

In the rare case where \net messages are dropped in the network, the \net primitive delivers \drop{}s to non-faulty replicas.
To handle \drop{}s, replicas only need to agree on whether to process or to skip the message, not the order of messages.
\sysname uses a BFT binary consensus protocol, driven by a leader replica, to reach this agreement.
In this protocol, the leader uses a single ordering certificate (received by any replica) to commit the corresponding message.
To permanently skip the message, the leader replica collects evidences from a quorum of replicas to form a \textit{drop certificate}, which non-leader replicas would verify before committing the message as a \nop.
We use drop certificates to prevent Byzantine replicas from delaying the agreement indefinitely.

A faulty \net sequencer may stop multicasting messages or deliberately equivocating or dropping messages.
To ensure progress, \sysname replicas request the network to replace the faulty sequencer.
Installment of a new sequencer indicates the start of a new epoch.
Correctness of the protocol requires replicas to agree on the set of messages processed in the last epoch before entering the new epoch.
To that end, each \sysname instance goes through a sequence of \textit{views};
each view is identified by a view number represented as a $\tupstart\textit{epoch-num}\com \textit{leader-num}\tupend$ 2-tuple.
We establish a \textit{total order} -- defined by the lexicographical order of the 2-tuple -- on the set of view numbers.
When the current leader replica has failed (or suspected to be failed) or an old epoch has ended, replicas advance the respective field in the view number, and use a view change protocol~\cite{vr, pbft, nopaxos, specpaxos} to reach agreement on the set of messages in the last view.

\sysname also includes a synchronization protocol that periodically synchronizes replica states and commits speculatively executed requests.
Details of the synchronization protocol, as well as formal safety and liveness proofs of \sysname, can be found in 
\ifarxiv
\autoref{sec:protocol-more}.
\else
the supplementary material~\cite{suppl}.
\fi

\subsection{Normal Operation}
\label{sec:protocol:normal}

We first consider the common case protocol in which \net messages, instead of \drop{}s, are delivered to \sysname replicas in a stable epoch.
A client \textit{c} requests execution of an operation \textit{op} by sending a signed message $\tupstart\req\com \textit{op}\com \textit{request-id}\tupend\sig{c}$ using the \net primitive, where \textit{request-id} is a client-generated identification to match replica replies. 
The message is processed by the network primitive (\cref{sec:network:tom}), and an ordering certificate (\oc) for the message is delivered to all replicas.
If the client does not receive replies in a timely manner, it uses regular unicast to send the request to all replicas directly (while keeps resending the request using \net).
Our view change protocol (\cref{sec:protocol:view-change}) ensures that the request is eventually committed even if the \net sequencer is faulty.

Replica $i$ verifies the \oc by authenticating the \net authenticator and checking the $2f+1$ matching confirmations (only for Byzantine-faulty network). It then adds the \oc to its log, speculatively executes \textit{op}, signs and replies $\tupstart\reply\com \textit{view-id}\com i\com \textit{log-slot-num}\com \textit{log-hash}\com \textit{request-id}\com \textit{result}\tupend\sig{i}$ to the client, where \textit{view-id} is the current view number, \textit{log-slot-num} is the log index the request occupies, \textit{log-hash} is a hash of the log up to the index, and \textit{result} is the execution result.
Similar to prior work~\cite{specpaxos}, we use hash chaining for $O(1)$ hash calculation.

Client $c$ waits for $2f+1$ replies from different replicas with valid signatures and matching \textit{view-id}, \textit{log-slot-num}, \textit{log-hash}, and \textit{result}. 
It then accepts the result in the reply.

\subsection{Handling Dropped Messages}
\label{sec:protocol:gap}

When a non-leader replica $i$ receives a \drop, it attempts to recover the missing message from the leader.
To do so, it sends a $\tupstart\query\com \textit{view-id}\com \textit{log-slot-num}\tupend$ to the leader.
\query messages require no signatures since they do not alter the state of a correct replica.
If the leader has the corresponding \oc, it responds with a $\tupstart\queryreply\com \textit{view-id}\com \textit{log-slot-num}\com \oc\tupend$.
Replica $i$ verifies the \oc and ensures the enclosed \net message is the missing message by checking the internal sequence number.
It then resumes normal operation.
Because \oc can be independently verified by any replica, \queryreply{}s also require no signatures.
Note that replica $i$ \textit{blocks} on waiting for the leader's response or a committed \nop before processing subsequent client requests, resending \query messages if necessary.

If the leader $l$ itself receives a \drop, it broadcasts a $\tupstart\gapfindmsg\com \textit{view-id}\com \textit{log-slot-num}\tupend\sig{l}$ to all replicas.
When replica $i$ receives a \gapfindmsg, it replies to the leader with either a $\tupstart\gaprecvmsg\com \textit{view-id}\com \textit{log-slot-num}\com \oc\tupend$ if it has received the ordering certificate, or a $\tupstart\gapdropmsg\com \textit{view-id}\com i\com \textit{log-slot-num}\tupend\sig{i}$ if it has also received a \drop for the message.
If a replica replies \gapdropmsg to a \gapfindmsg, it blocks until it receives the gap agreement decision (ignoring \queryreply{}s for the message).

Once the leader receives one \gaprecvmsg or $2f+1$ \gapdropmsg (including from itself), whichever happens first, it uses a \textit{binary} Byzantine agreement protocol, similar to PBFT~\cite{pbft}, to commit the decision. 
Replicas store the quorum prepare (\gapprepare) and commit (\gapcommit) messages it collects during the agreement protocol for future view changes.
We refer to the quorum \gapcommit{}s as a \textit{gap certificate}.
Once the decision is committed, replicas store either the \oc (message is received) or a \nop (message is permanently dropped) in their logs, and resume normal operation.
Details of this agreement protocol can be found in 
\ifarxiv
\autoref{sec:protocol-more:gap}.
\else
the supplementary material~\cite{suppl}.
\fi

\subsection{View Changes}
\label{sec:protocol:view-change}

We use a view change protocol, inspired by PBFT, to handle both leader failures and faulty \net sequencers.
The protocol guarantees that all committed operations (including \nop{}s) will carry over to the new view.
View changes can be initiated in two scenarios: 
(i) when a non-leader replica in view $\tupstart{}e, l\tupend$ fails to make progress in a gap agreement or state synchronization protocol after a timeout, it initiates a view change with a new view $\tupstart{}e, l+1\tupend$, and
(ii) when a replica receives a request message directly from the client (\cref{sec:protocol:normal}) but the request is not delivered by \net after a timeout, it initiates a view change with a new view $\tupstart{}e+1, l\tupend$.

For view changes that involves switching epochs, the protocol requires log consistency before entering the new epoch. 
To that end, we introduce an \textit{epoch certificate} consisting of $2f+1$ valid \epochstart messages from distinct replicas.
An epoch certificate is a proof of the agreed starting log position of the epoch.
We then define \textit{validity} of a replica log as the following: a replica log is valid if and only if (i) the starting log position of all epochs are supported by a valid \textit{epoch-cert}, and (ii) within each epoch $e$, all log positions are filled with either a valid \oc or a \nop supported by a gap certificate.

Our view change protocol is similar to the one in PBFT.
The main differences are the additional epoch certificates and the definition of log validity.
Details of the protocol can be found in 
\ifarxiv
\autoref{sec:protocol-more:viewchange}.
\else
the supplementary material~\cite{suppl}.
\fi

\subsection{Correctness}

Here, we sketch a proof of correctness for the \sysname protocol.
Complete safety and liveness proofs can be found in 
\ifarxiv
\autoref{sec:proof}.
\else
the supplementary material~\cite{suppl}.
\fi

The safety property we are proving is linearizability~\cite{linearizability}.
A key definition we use in our proof is \textit{committed operations}:
an operation is committed in a log slot if it is executed by $2f+1$ replicas with matching \textit{view-id}s and \textit{log-hash}es.

First, we show that within an epoch, if a request $r$ is committed at log slot $l$, no other request $r'$ ($r' \ne r$) or \nop can be committed at $l$.
Due to the guarantees of \net, no correct replicas will execute $r'$ at log slot $l$ given that some correct replica has already executed $r$, so $r'$ can never be committed at $l$.
To show that \nop cannot be committed, we prove by contradiction.
Assume a \nop is committed at $l$, some replica would have received a \gapdecision with $2f+1$ valid \gapdropmsg from the leader, and $2f$ \gapprepare containing a \textit{drop} decision from different replicas.
By quorum intersection, no replica can receive $2f$ distinct \gapprepare{}s with a \textit{recv} at $l$, making \textit{drop} the only possible decision.
Our view change protocol also ensures that the decision will persist in all subsequent views.
Moreover, $2f+1$ replicas have sent a \gapdropmsg, and at least $f+1$ of those replicas are non-faulty.
Since they block until they receive \gapcommit{}s and the only possible outcome is \textit{drop}, they will not execute $r$.
By quorum intersection, $r$ cannot be committed, leading to a contradiction.

Next, we show that within an epoch, if a request is committed at log slot $l$, all log slots before $l$ will also be committed.
A committed request at $l$ implies that at least $f+1$ correct replicas have matching logs up to $l$.
For each log slot before $l$, since the same request has been processed by $f+1$ correct replicas, we can apply the same reasoning as above to show that no other request or a \nop can ever be committed at that slot.

Lastly, we show that our view change protocol guarantees that correct replicas agree on all committed requests and \nop{}s across views, and that they start each epoch in a consistent log state.
The first point is easy to show given that our view change protocol merges $2f+1$ logs and using the quorum intersection principle.
To prove the second point, we only need to show that for each epoch $e$, all correct replicas end $e$ at the same log slot before starting $e+1$.
To enter epoch $e+1$, a correct replica needs a valid \textit{epoch-cert} for epoch $e+1$: $2f+1$ distinct \epochstart{}s with matching \textit{log-slot-num}.
By quorum intersection, no other \textit{epoch-cert} can exist for $e+1$ with a different \textit{log-slot-num}.
And since correct replicas verify \textit{epoch-cert} for every epoch during view changes, by induction, their logs will be in consistent state.

\section{Evaluation}
\label{sec:eval}

We implement \sysname and the \net library in $\sim$1600 lines of Rust code.
The HMAC version of the \net sequencer is implemented in ${\sim}1900$ lines of P4~\cite{p4} code compiled using the Intel P4 Studio version 9.7.0. 
The FPGA-based cryptographic accelerator comprises of ${\sim}1500$ lines of HLS C++/Verilog code, synthesized using the Xilinx Vivado Design Suite 2020.2~\cite{vivado}.

We compare \sysname against PBFT~\cite{pbft}, HotStuff~\cite{hotstuff}, Zyzzyva~\cite{zyzzyva}.
For a fair comparison, all protocols are implemented in the same Rust-based framework.
We implemented the event-driven version of HotStuff, which closely matches their open-source implementation~\cite{hotstuff-opensource}.
We also added batching support to PBFT, HotStuff, and Zyzzyva, following the respective batching technique proposed in the original work.
\sysname does not use any batching on the protocol level.
Only when using the public key cryptography variant of \net, \sysname replicas buffer packets until receiving a signed message from the network.
Unless specified otherwise, we ran all protocols on four replicas, tolerating one Byzantine failure.
As a baseline, we also implemented an unreplicated protocol that runs on one server, tolerating no failures.

\paragraph{Testbed.}
Our testbed consists of nine servers and a Xilinx Alveo U50 FPGA, all connected to an Intel Tofino-based~\cite{tofino} programmable switch.
Replicas ran on machines with dual 2.90GHz Intel Xeon Gold 6226R processors (32 physical cores), 256 GB RAM, and Mellanox CX-5 EN 100 Gbps NICs.
Clients ran on a machine with one 2.10GHz Intel Xeon Gold 6230 processor (20 physical cores), 96 GB RAM, and Mellanox CX-5 EN 100 Gbps NICs.

\subsection{Micro-benchmarks}
\label{sec:eval:micro}

We first conduct micro-benchmarks to measure the performance and resource usage of our \net network primitive.
We evaluate both the HMAC-based switch pipeline and the FPGA-based cryptographic accelerator.

\paragraph{HMAC-based in-switch authentication.}

\begin{table}[tb]
\centering
\begin{tabular}{p{12mm}|>{\centering\arraybackslash}m{8.5mm}>{\centering\arraybackslash}m{8.5mm}>{\centering\arraybackslash}m{8mm}>{\centering\arraybackslash}m{8mm}c}
 Module & Stages & Action Data & Hash Bit & Hash Unit & VLIW\\\hline
 Pipe 0 & 7 & 0.8\% & 2.0\% & 0\% &  3.4\%\\
 Pipe 1 & 12 & 12.8\% & 21.2\% & 77.8\% & 12.0\% \\
\end{tabular}
\caption{Switch resource usage of \net HMAC vector prototype}
\label{tab:sw-util}
\end{table}

\autoref{tab:sw-util} shows the switch resource usage of our in-network HMAC vector design~\autoref{sec:design:hmac}.
To measure the latency and sustainable throughput of our design, we generate 64 byte \net packets at line rate using the Tofino built-in packet generator.
The average latency for generating one HMAC vector (size of four) is $8.5\mu$s. 
Our switch design attains a maximum HMAC throughput of 5.7Mpps, which is around 22 million hashes per second. 

\paragraph{FPGA-based ECDSA accelerator.}

\begin{table}[tb]
\centering
\begin{tabular}{p{15mm}|cccc}
 Module & LUT & Register & BRAM & DSP\\\hline
 Pipeline & 0.91\% & 0.70\% & 2.12\% & 0.57\% \\
 Signer & 21.0\% & 19.4\% & 10.71\% & 28.52\% \\
 Total & 34.69\% & 29.22\% & 28.76\% & 29.16\% \\\hline
 Available & 870K & 1740K & 1.34K & 5.94K
\end{tabular}
\caption{FPGA resource usage of \net public-key cryptographic accelerator}
\label{tab:fpga-util}
\end{table}

Hardware resource usage of our crypto accelerator (\autoref{sec:design:crypto}) is summarized in \autoref{tab:fpga-util}.
Using RTL-level simulation at 230MHz clock rate, latency of packet processing without signature signing is 835ns, giving a maximum throughput of 1.20M packets/sec.
The \texttt{secp256k1} signer module incurs around 745ns additional latency. 
The burst signing rate is at 57M signatures/sec, while the sustained signing rate is around 81.78K signatures/sec, limited by the pre-calculation module.

We also measured end-to-end performance by generating \net packets from an end-host to the FPGA through the Tofino switch. 
To accurately measure accelerator latency, the switch adds hardware timestamps before forwarding to the accelerator and after receiving the reply.
Signing latency is measured at $2.36\mu$s (including the Ethernet module).
The maximum overall throughput is 1.20M packet/sec, and the sustainable signing throughput is 71.7K signatures/sec.

\subsection{Latency vs. Throughput}

\begin{figure}
    \centering
    \includegraphics[width=\linewidth]{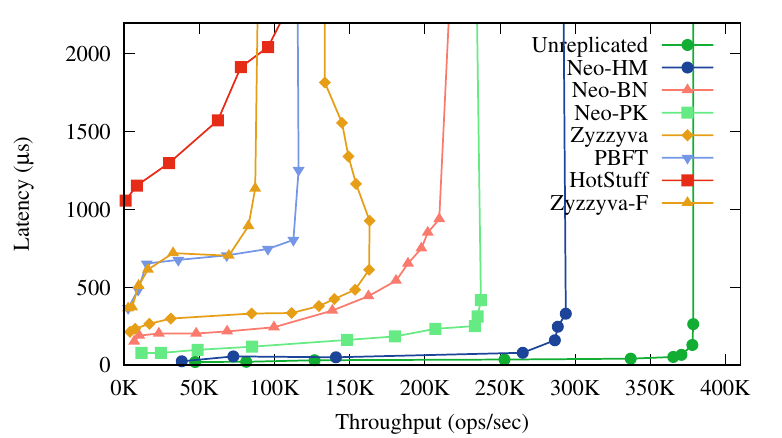}
    \caption{Comparing latency and throughput of \sysname and other BFT protocols. Neo-HM and Neo-PK are the HMAC and public key versions of \sysname, and Neo-BN uses the \net variant that tolerances a Byzantine network. 
    Zyzzyva-F is Zyzzyva with a non-responding Byzantine replica.}
    \label{fig:latency-throughput}
\end{figure}

We next evaluate the latency and throughput of \sysname and compare them to other BFT protocols.
Our focus here is protocol-level performance, so we ran a simple echo-RPC application: 
client requests are randomly generated strings, and the state machine running on the replicas simply echo the same string back to the client.
We used an increasing number of closed-loop clients, and measured the end-to-end latency and throughput observed by the clients.

As shown in~\autoref{fig:latency-throughput}, HMAC-based \sysname achieves higher maximum throughput than PBFT (2.5$\times$) and HotStuff (3.4$\times$).
More aggressive batching can further increase HotStuff's throughput to a level comparable to \sysname; however, its latency also increases to more than 10ms.
To commit a client operation, these protocols require explicit coordination among the replicas, with each message requiring expensive cryptographic operations.
\sysname, on the other hand, leverages guarantees of \net to eliminate coordination and cross-replica authentication overhead in the common case.
Comparing to Zyzzyva, \sysname still achieves 1.8$\times$ higher throughput.
Moreover, when one of the replicas become faulty, throughput of Zyzzyva drops by more than 54\%, while throughput of \sysname is unaffected. 
When using the public-key variant of \net, \sysname only suffers a 60K throughput decrease, despite requiring more expensive cryptographic operations.
It demonstrates the efficiency of our in-network crypto accelerator design.

\autoref{fig:latency-throughput} also shows the bigger benefit of \sysname --- latency.
HMAC-based \sysname outperforms PBFT in latency by 14.68$\times$, HotStuff by 42.28$\times$, and Zyzzyva by 8.56$\times$.
\sysname commits client operations in two message delays, while the other three protocols require at least three message delays with additional authentication penalties.
Using the public-key variant of \net adds about 55$\mu$s to the latency of \sysname. 
However, this version of \sysname still outperforms all the other protocols in latency by at least 2.7$\times$.


\paragraph{Tolerating Byzantine network.}
As discussed in \autoref{sec:design:order}, to tolerate Byzantine network sequencers, \net receivers exchange and authenticate confirmation messages.
This can lead to degraded throughput and latency compared to the non-Byzantine network variant.
\autoref{fig:latency-throughput} shows the performance of \sysname when tolerating a Byzantine network.
By batch processing of confirmation messages, \sysname minimizes the impact of the additional message exchanges, and is able to sustain a high throughput at the expense of higher latency. 
As shown in the figure, this \sysname variant still outperforms the other comparison protocols in both throughput and latency.

\subsection{Protocol Scalability}
\begin{figure}
    \centering
    \includegraphics[width=\linewidth]{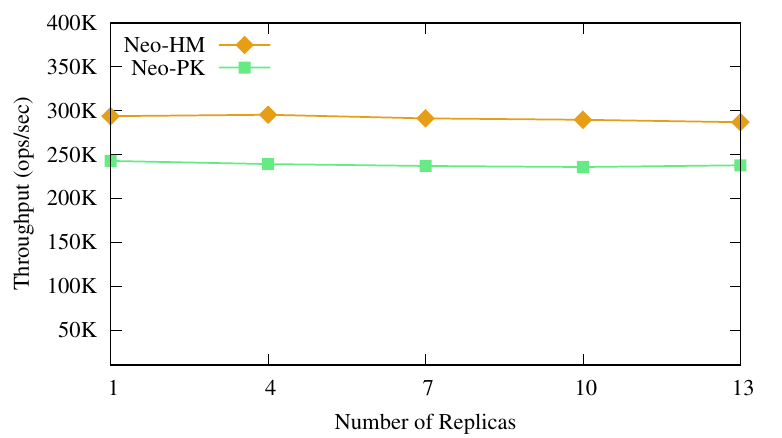}
    \caption{Measuring the throughput of public-key cryptography version of \sysname with increasing number of replicas}
    \label{fig:scalability}
\end{figure}

To evaluate the scalability of \sysname, we gradually increase the number of \sysname replicas (i.e., capable of tolerating more Byzantine failures), and measure the maximum sustainable throughput.
Due to the limited number of physical servers in our cluster, we deploy up to $13$ replicas, with $f$ replicas not running to simulate faulty nodes.
As shown in~\autoref{fig:scalability}, \sysname is able to scale to $13$ replicas with only a ${\sim}2.39$\% throughput drop.
\sysname replicas process a constant number of messages per client request, regardless of the replica count.
This allows \sysname to scale its performance almost linearly with more replicas.
Adding replicas, however, would increase the number of reply messages \sysname clients need to receive.
\sysname effectively shift the collector load to the client, which can naturally scale.

\subsection{Resilience to Network Anomalies}

\begin{figure}
    \centering
    \includegraphics[width=\linewidth]{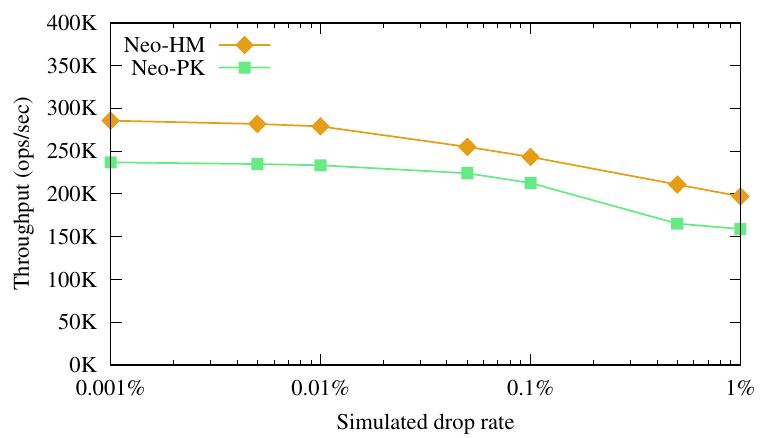}
    \caption{Throughput of \sysname with simulated packet drops}
    \label{fig:drops}
\end{figure}

When \net messages are dropped in the network, \sysname replicas coordinate to agree on the fate of the message.
To evaluate \sysname{}'s resilience to network anomalies, we simulate packet drops in the network, and measure the maximum throughput of \sysname.
As shown in \autoref{fig:drops}, throughput of \sysname is largely unaffected when moderate amount of packets are dropped.
This is due to the drop detection property of \net: non-faulty \sysname replicas can efficiently recover missing messages from each other, without the expensive agreement protocol.
When a higher percentage of packets are dropped (1\%), \sysname does suffer a more observable throughput drop.

\paragraph{Sequencer switch failover.}
When the sequencer switch becomes faulty, \sysname performs a view change and fails over to a different sequencer.
To understand the impact of a faulty sequencer, we measure the throughput of \sysname during a switch failover.
Throughput of \sysname dropped to zero after the sequencer became faulty.
After replicas detect the fault, they ran a view change protocol which finished in less than 200 $\mu$s.
They then inform the network controller to pick a new sequencer.
After rerouting is done (around 100 ms), throughput of \sysname quickly resumed to its peak.

\subsection{BFT Storage System Performance}


\begin{figure}
    \centering
    \includegraphics[width=\linewidth]{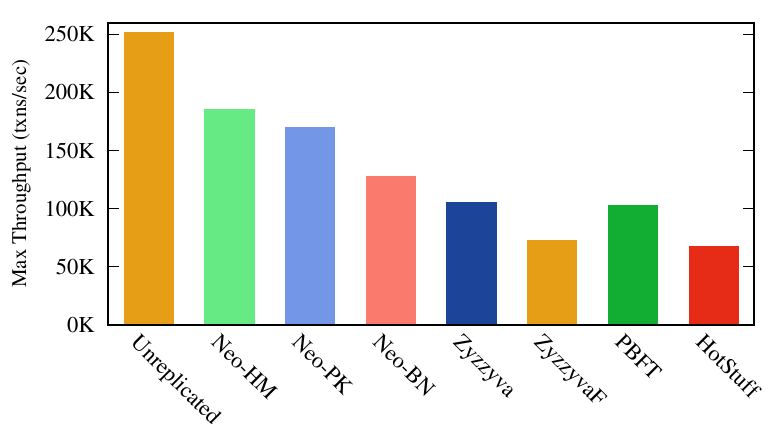}
    \caption{Comparing performance of replicated key-value store using YCSB}
    \label{fig:kv}
\end{figure}

Lastly, we evaluated the performance of \sysname when running more complex real-world applications, and compared against other protocols.
We developed an in-memory, B-Tree-based key-value store, and ran YCSB workload A with 100K records and 128-bytes fields.
Maximum YCSB throughput attained by each system is shown in~\autoref{fig:kv}.
\sysname achieved higher throughput than PBFT, HotStuff, and Zyzzyva when running a more complex application. 
This application requires protocols to handle larger requests than previous experiments, which reduces the efficiency of batching for Zyzzyva, PBFT and HotStuff. 
\sysname exploits its lower message complexity to attain higher performance. 
\section{Related Work}
\label{sec:relwk}

\paragraph{BFT protocols.}
As discussed in \cref{sec:background:protocols}, there has been a long line of work on designing practical BFT protocols~\cite{pbft, zyzzyva, sbft, hotstuff}.
These protocols guarantee correctness in an asynchronous network, and ensure liveness during weak asynchrony.
They all use a single leader node to coordinate ordering and agreement, and rely on view change (or similar) protocols to deal with faulty leaders.
Byzantine Paxos~\cite{bpaxos} and DBFT~\cite{dbft} propose a leaderless BFT design, but require a synchronous protocol that commits in $O(f)$ or more rounds.
HoneyBadger~\cite{honeybadger} attacks the weak synchrony assumption and provides optimal asymptotic efficiency.
However, it introduces $O(N^3)$ message complexity and five message delays.
\sysname leverages the guarantees of \tom to eliminate the leader and coordination overhead in the common case, leading to a bottleneck message complexity of $O(1)$ and two message delays to commit an operation.

\paragraph{BFT with trusted components.}
Recent work have proposed leveraging trusted components to improve BFT protocols~\cite{a2m, trinc, minbft, ttcb, hybster}.
Using a local trusted component on each replica enables these protocols to reduce the replication factor from $3f+1$ to $2f+1$.
However, since the trusted components are all local to replicas, they still necessitate coordination among replicas to commit each client operation.
Moreover, many of these protocols propose to implement trusted components in resource-constrained TPM hardware which significantly limit their performance.
\sysname implements its ordering service in the data center network.
Relying on authenticated network ordering, the protocol avoid all coordination in normal operation.
And by implementing on fast networking hardware, the service does not become the performance bottleneck.

\paragraph{Network ordering.}
A classic line of work in distributed computing proposes stronger network models, such as atomic broadcast~\cite{zab, abcast} and virtual synchrony~\cite{isis, virtsync}, to simplify distributed system designs.
These network primitives guarantees that a total order of messages are delivered to all broadcast receivers.
However, atomic broadcast and virtual synchrony do not offer performance benefits to distributed systems -- implementing them is equivalent to solving consensus~\cite{faildetect}.
NOPaxos~\cite{nopaxos} and Eris~\cite{eris} pioneered a weaker network model in which messages are delivered in a consistent order but reliable transmission is not guaranteed.
This weaker model can be efficiently implemented using programmable switches and other networking hardware.
NOPaxos proposes a Ordered Unreliable Multicast primitive for state machine replication, while Eris designs a multi-sequenced groupcast primitive for distributed transactions.
BIDL~\cite{bidl} uses sequencers to parallelize consensus and transaction execution in a permissioned blockchain system.
However, BIDL still uses traditional BFT protocols for consensus and its sequencer design does not improve performance of the BFT protocol itself.
The network sequencing approach has also been applied to other distributed system designs~\cite{corfu, vcorfu, tango}.
Our \net primitive was inspired from these work. 
Moreover, \net provides transferable authentication guarantee, which is crucial for BFT protocols.

\section{Conclusion}
\label{sec:conclusion}

In this work, we presented a new approach to designing high performance BFT protocols in data centers.
Our work proposed a novel in-network authenticated ordering service. 
We demonstrated the feasibility of this design by implementing two network primitive variants, one using HMAC and another using public key cryptography for authentication.
We then co-designed a new BFT SMR protocol, \sysname, that eliminate cross-replica coordination and authentication in the common case.
\sysname outperforms state-of-the-art BFT protocols on both throughput and latency metrics.

\bibliographystyle{plain}
\bibliography{paper}

\clearpage
\appendix

\section{Additional Protocol Details}
\label{sec:protocol-more}

\subsection{Gap Agreement Protocol}
\label{sec:protocol-more:gap}
When the leader in \sysname receives a \drop, it broadcasts a \gapfindmsg, and waits for either a single \gaprecvmsg or $2f+1$ \gapdropmsg{}s.
It then uses a binary Byzantine agreement protocol to commit the decision.
In this section, we describe the detail of the agreement protocol, which is omitted in the main paper.

The leader broadcasts a $\tupstart\gapdecision\com \textit{view-id}\com \textit{log-slot-num}\com \textit{decision}\tupend\sig{l}$, where \textit{decision} is either a single \gaprecvmsg or $2f+1$ \gapdropmsg{}s.
If a \gaprecvmsg is received, the leader first verifies the enclosed \oc following the same procedure as above.

When replica $i$ receives a \gapdecision, it verifies the enclosed \oc if the decision contains a \gaprecvmsg.
If the decision contains $2f+1$ \gapdropmsg, the replica verifies that all $2f+1$ messages are from distinct replicas, and their \textit{log-slot-num} matches the one in the \gapdecision.
It then broadcasts a $\tupstart\gapprepare\com \textit{view-id}\com i\com \textit{log-slot-num}\com \textit{recv-or-drop}\tupend\sig{i}$, where \textit{recv-or-drop} is a binary value indicating the decision, to all replicas.

Once replica $i$ receives $2f$ \gapprepare{}s from distinct replicas (possibly including itself) and it has received a validated \gapdecision with a matching decision from the leader, it broadcasts a $\tupstart\gapcommit\com \textit{view-id}\com \textit{log-slot-num}\com \textit{recv-or-drop}\tupend\sig{i}$ to all replicas.
The replica also stores the \gapdecision and $2f$ \gapprepare{} in its log for the view change protocol.

When replica $i$ receives $2f+1$ \gapcommit{}s from different replicas (possibly including itself), it stores either the \oc (if it hasn't done so) or a \nop to the log slot based on the decision, and resuming normal operation if it is blocking on a \queryreply or a gap agreement decision.
It also stores all $2f+1$ \gapcommit{}s in its log.
This quorum of \gapcommit{} will serve as a \textit{gap certificate} for the state synchronization and view change protocols.
In the rare case where the replica has already speculatively executed the request and the decision is a drop, it rolls back the application state to right before \textit{log-slot-num}, and re-executes subsequent requests in the log.

\subsection{View Change Protocol}
\label{sec:protocol-more:viewchange}

In this section, we provide the detail of the view change protocol, which is omitted in the main paper.

When replica $i$ initiates a view change, it broadcasts a $\tupstart\viewchange\com \textit{view-id}\com v'\com \textit{epoch-cert}\com \textit{log}\tupend\sig{i}$ to all other replicas, where \textit{view-id} is its current view number, $v'$ is the new view, and \textit{epoch-cert} contains an epoch certificate for each epoch it has started.

When leader replica $l$ of view $v'$ receives $2f$ \textit{valid} \viewchange messages for view $v'$ from different replicas, it merges the logs in the \viewchange messages as follows:
\begin{enumerate}
    \item It finds the \textit{view-id} with the largest epoch number $e$ that is supported by a \textit{epoch-cert}.
    \item If the replica has not started epoch $e$ yet, it finds a valid log that has started epoch $e$. 
    It then copies all requests and \nop{}s before the starting log position of epoch $e$ to its own log.
    \item From all valid logs that have started epoch $e$, it locates the log with the largest \textit{seq-num} in epoch $e$.
    It then copies all requests in epoch $e$ from the log to its own log.
    \item From any valid log that have started epoch $e$, it copies all \nop{}s in epoch $e$ from the log into its own log, possibly overwriting existing requests.
\end{enumerate}
Leader replica $l$ then broadcasts a $\tupstart\viewstart\com v'\com \textit{view-change-msgs}\tupend\sig{l}$ to other replicas, where \textit{view-change-msgs} contains the $2f$ \viewchange messages it uses to merge the log and the \viewchange message it would have sent for $v'$.

When replica $i$ receives a \viewstart message with $v'$ higher than its current view, it checks that \textit{view-change-msgs} are properly signed by $2f+1$ different replicas, they all contain the same next view number $v'$, and their logs are valid.
It then merges its log with logs in \textit{view-change-msgs} using the same procedure we described above.

If the view change does not involve a epoch switch, replica $i$ can immediately enter the new view.
Otherwise, it broadcasts a $\tupstart\epochstart\com e'\com \textit{log-slot-num}\tupend\sig{i}$ to all replicas, where $e'$ is the new epoch number, and \textit{log-slot-num} is the last log index after merging the logs during view change.
Once a replica receives $2f+1$ \epochstart messages from different replicas with $e'$ and \textit{log-slot-num} matching its own, it can enter the new view.
It also stores these \epochstart messages locally as an epoch certificate for future view changes.

\subsection{State Synchronization}
\label{sec:protocol-more:sync}

During normal operations, replicas execute client requests speculatively before they become durable.
A speculatively executed request might be overwritten due to the gap agreement or view change protocols, and the replica has to roll back application state and re-executes all subsequent requests in the log.
To further reduce the frequency of roll backs and the number of re-executions, we use a periodic synchronization protocol. 
The goal of the synchronization protocol is to produce a \textit{sync-point}, where all log entries before and including the \textit{sync-point} are \textit{committed}.
A committed log entry will never be overwritten or removed, and will be present in the log (at the same position) of all non-faulty replicas in all subsequent views.

After every $N$ entries are added to the log ($N$ is a configurable constant), a replica $i$ broadcasts a $\tupstart\sync\com \textit{view-id}\com \textit{log-slot-num}\com \textit{drops}\tupend\sig{i}$ to all replicas, where \textit{log-slot-num} is latest log index that is a multiple of $N$, and \textit{drops} contains \textit{gap certificates} for all log slots that have been committed as \nop in the current view.
Once replica $i$ receives $2f$ \sync messages with the same \textit{log-slot-num} from different replicas, for each entry in any of the \textit{drops} that has a valid \textit{gap certificate}, it writes a \nop (possibly overwriting existing request) to the corresponding log position and saves the gap certificate.
It then updates its \textit{sync-point} to \textit{log-slot-num}.
\section{Correctness Proof}
\label{sec:proof}

This section contains complete safety and liveness proofs of the \sysname protocol.

\subsection{Safety}

The main safety property guaranteed by \sysname is linearizability.
In this safety proof, we assume the network primitive \net provides \textit{transferable authentication}, \textit{ordering}, and \textit{drop detection}, as specified in the main paper.

\begin{thm}[\sysname Safety]\label{thm:safety}
\sysname guarantees linearizability of client operations and returned results.
\end{thm}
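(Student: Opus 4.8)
The plan is to prove linearizability by reducing it to two standard conditions for state-machine replication: (i) all correct replicas execute the same totally ordered sequence of \emph{committed} operations, and (ii) each operation's commit point lies between its invocation and the client's acceptance of a result. Following the definition used in the correctness sketch, I would call an operation committed at slot $l$ once $2f+1$ replicas have executed it with matching view-ids and log-hashes. Since a client accepts a result only after collecting $2f+1$ matching \reply messages with agreeing slot and log-hash, the commit point can serve as the linearization point and condition (ii) follows almost directly; the subtle real-time obligation (if one operation returns before another is invoked, the first is committed at an earlier slot) holds because a request receives its \net sequence number only when sent, hence after the earlier operation's commit. The bulk of the work is therefore establishing (i).

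I would structure (i) as three lemmas. First, \textbf{slot agreement within an epoch}: within a fixed epoch, each log slot $l$ admits at most one committed value. If request $r$ is committed at $l$, then at least $f+1$ correct replicas executed $r$ there; by the \emph{ordering} guarantee of \net, no correct replica delivers a distinct $r'$ at $l$, so $r'$ can never gather $2f+1$ executions. Ruling out a committed \nop at $l$ uses the quorum-intersection argument from the sketch: a committed \nop implies a drop decision backed by $2f+1$ \gapdropmsg together with $2f$ drop-valued \gapprepare messages, which forces at least $f+1$ correct replicas to block and refuse to execute $r$, contradicting $r$'s commitment.

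Second, \textbf{prefix consistency within an epoch}: because log-hash is a chained hash over the entire prefix, a committed value at $l$ witnesses $f+1$ correct replicas with identical logs through $l$, so every earlier slot inherits the same single-value guarantee of the first lemma. Third, \textbf{cross-view and cross-epoch durability}: I would show that the view-change merge preserves every committed value — any committed value sits in a $2f+1$ quorum of logs, which intersects the $2f+1$ \viewchange logs the new leader merges, so the merge rules carry it over — and that epoch boundaries are uniquely fixed by epoch certificates ($2f+1$ matching \epochstart). Then, by induction over epochs, all correct replicas' logs remain identical on their committed prefixes across every view, which combined with the first two lemmas yields condition (i).

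The main obstacle I anticipate is the cross-epoch induction. I must verify that the \emph{validity} predicate on logs (every epoch start backed by an epoch-cert, every slot filled by a valid \oc or a gap-certified \nop) is strong enough to stop a Byzantine leader from injecting a log that passes validation yet erases a committed value, while remaining weak enough that correct replicas can always produce valid logs. Pinning down the quorum intersections precisely at epoch transitions — where the relevant quorums are epoch certificates rather than per-slot gap certificates — is where the speculative fast path and the recovery machinery interact most delicately, and getting that interface right is the crux of the safety argument.
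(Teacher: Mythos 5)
Your proposal is correct and takes essentially the same route as the paper's proof: your three lemmas map one-to-one onto the paper's lemmas on epoch-start agreement via epoch certificates, slot uniqueness within an epoch (no conflicting \req by \net ordering, no gap-certified \nop by quorum intersection over \gapprepare{}s), and preservation of committed entries across view changes, while your condition (ii) and sequence-number timing argument match the paper's closing real-time step. The only difference is presentational: the paper bundles the same content into a single ``stable log'' lemma and concludes by arguing indistinguishability from a single correct machine, rather than naming explicit linearization points.
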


Before proving \autoref{thm:safety}, we first define a few properties of \sysname replica logs and client \req{}s.

\begin{defn}
A \req is \textit{committed} in a log slot if it is executed by $2f+1$ distinct replicas in that slot with matching \textit{view-id} and \textit{log-hash}.
\end{defn}

\begin{defn}
A \req is \textit{successful} if the client receives $2f+1$ valid \reply{}s from different replicas with matching \textit{view-id}, \textit{log-slot-num}, \textit{log-hash}, and \textit{result}.
\end{defn}

It is easy to see that a successful \req implies that the \req is committed.

\begin{defn}
A log is \textit{stable} if it is a prefix of the log of every non-faulty replica in views higher than the current one.
\end{defn}

\begin{lem}[Log Stability]\label{lem:log-stability}
Every successful \req was appended onto a stable log at some non-faulty replica, and the resulting log is also stable.
\end{lem}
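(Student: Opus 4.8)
The plan is to reduce the statement to a single persistence claim about committed log prefixes and then prove that claim by a double induction. First I would invoke the observation (already noted) that a successful \req is committed: the client's $2f+1$ matching \reply{}s come from at least $f+1$ non-faulty replicas, all reporting the same \textit{view-id} $v$, the same \textit{log-slot-num} $l$, and the same \textit{log-hash} $h$. Because \textit{log-hash} is a hash-chained digest of the entire log up to slot $l$, collision resistance forces all $f+1$ of these non-faulty replicas to hold the identical log prefix $L$ of length $l$ in view $v$. The whole lemma then reduces to showing that $L$ is stable, i.e.\ a prefix of the log of every non-faulty replica in every view $v' > v$: stability of $L$ gives the second conjunct directly, and stability of its length-$(l-1)$ truncation $L'$ (handled by the same argument, or by induction on $l$ using that every slot below $l$ is itself committed per the main-paper sketch) gives the first conjunct, since one of the $f+1$ non-faulty repliers appended the \req at slot $l$ onto the stable prefix $L'$.

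To prove $L$ stable I would induct over the views higher than $v$, in increasing order. The inductive hypothesis is that every non-faulty replica that has entered a view strictly between $v$ and $v'$ holds $L$ as a prefix and has never truncated it. For the step, consider the view change that installs $v'$: its \viewstart carries $2f+1$ valid \viewchange messages. By quorum intersection between this set and the committing quorum for $L$, at least one non-faulty replica $p$ appears in both; by the inductive hypothesis $p$'s contributed log is valid and still contains $L$. I would then argue that the merge procedure preserves $L$. In the same-epoch case this follows because the merge selects, slot by slot within the current epoch, the request carried by the longest valid log, and \net non-equivocation guarantees at most one valid \oc---hence at most one request---per sequence number, so $p$'s copy of $L$ is exactly what any valid log must agree on.

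The main obstacle I anticipate is the two places where the merge can overwrite a request: step~4, which copies \nop{}s over existing requests, and the cross-epoch case. For step~4 I would use that a valid \nop requires a gap certificate ($2f+1$ \gapcommit{}s for \emph{drop}), and that the main-paper gap-agreement argument (quorum intersection forcing \emph{drop} to be impossible once a \req is committed) rules out a conflicting \nop at any slot of $L$; thus no valid log can carry a drop-\nop inside $L$ to overwrite it. For a view change that crosses an epoch boundary I would additionally invoke that each epoch's start position is pinned by a unique valid \textit{epoch-cert} ($2f+1$ matching \epochstart{}s, unique by quorum intersection), so steps~1--2 reconstruct the pre-epoch portion of $L$ intact and locate the correct epoch-$e$ segment. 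Combining these, the merged log adopted on entry to $v'$ contains $L$, closing the induction and hence establishing both conjuncts of the lemma.
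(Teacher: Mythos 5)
Your proof is correct and rests on the same essential machinery as the paper's, just packaged differently. The paper factors the argument into entry-level lemmas: it first shows each slot of the committed prefix is itself committed --- a \req via \autoref{lem:diff-req} together with the impossibility of a valid gap certificate, or a \nop via its gap certificate --- and then invokes an entry-persistence lemma (\autoref{lem:commit-view-change}), with the cross-view quorum intersection and the merge analysis isolated inside that lemma and \autoref{lem:epoch-start-log-pos}. You instead run a single induction over views on the whole prefix $L$, inlining the same two steps: quorum intersection between the committing quorum and the $2f+1$ \viewchange senders yields a non-faulty replica $p$ whose contributed log contains $L$, and the merge preserves $L$ by \net non-equivocation (merge step~3), gap-certificate impossibility at committed \req slots (merge step~4), and epoch-certificate uniqueness (cross-epoch case). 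The ingredients coincide; your version is more self-contained and makes explicit the collision-resistance role of the chained \textit{log-hash}, which the paper leaves implicit in ``matching logs,'' while the paper's factoring yields a reusable persistence lemma that its top-level safety proof can cite slot by slot. One case you leave unaddressed: if $L$ contains a committed \nop at some slot, merge step~3 may first write a \req there (another replica's valid log can legitimately hold an \oc for that sequence number even though \textit{drop} was committed), and preservation of $L$ then depends on step~4 copying \nop{}s from every valid log --- including $p$'s --- and overwriting requests; this is the mirror image of the obstacle you do handle and is closed by the same gap-certificate reasoning, so it is a small omission rather than a flaw.
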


To prove \autoref{lem:log-stability}, we first prove the following set of lemmas.

\begin{lem}\label{lem:epoch-start-log-pos}
All non-faulty replicas that begin an epoch begin the epoch with the same log position.
\end{lem}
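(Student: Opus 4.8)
The plan is to prove the lemma by a standard quorum-intersection argument over \emph{epoch certificates}. Recall from the view-change protocol (\autoref{sec:protocol-more:viewchange}) that a non-faulty replica begins epoch $e$ only after it has assembled a valid epoch certificate for $e$: a set of $2f+1$ \epochstart messages, signed by $2f+1$ \emph{distinct} replicas, all carrying the same epoch number $e$ and the same \textit{log-slot-num}. The starting log position of epoch $e$ at that replica is determined exactly by this agreed \textit{log-slot-num}. It therefore suffices to show that any two valid epoch certificates for the same epoch $e$ must carry identical \textit{log-slot-num} fields.

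First I would assume, for contradiction, that two non-faulty replicas begin epoch $e$ at distinct positions $p \ne p'$. By the entry rule this yields two epoch certificates $EC$ and $EC'$ for $e$ whose \textit{log-slot-num} fields are $p$ and $p'$ respectively, each backed by $2f+1$ distinct signers. Since the system has $n = 3f+1$ replicas, the two signer sets of size $2f+1$ overlap in at least $2(2f+1) - (3f+1) = f+1$ replicas; as at most $f$ replicas are faulty, at least one replica $q$ in the intersection is non-faulty. Replica $q$ would then have signed an \epochstart for epoch $e$ with \textit{log-slot-num} $= p$ and another with \textit{log-slot-num} $= p'$. The contradiction follows from the invariant that a correct replica announces at most one \textit{log-slot-num} per epoch number, forcing $p = p'$, so all non-faulty replicas begin epoch $e$ at the same position.

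The main obstacle is establishing precisely this last invariant — that a correct replica never signs two \epochstart messages bearing different \textit{log-slot-num} for the same epoch $e$. This is immediate if a correct replica broadcasts \epochstart for a given epoch only once; but I would need to rule out the case where several epoch-switching view changes all target epoch $e$ (with increasing leader numbers) and the deterministic log-merge procedure yields different merged logs across those attempts. I would close this gap by arguing that a correct replica fixes, the first time it broadcasts \epochstart for $e$, the \textit{log-slot-num} it announces, and that any subsequent \epochstart it sends for $e$ reuses this cached value; the unforgeability of signatures then prevents any faulty replica from fabricating a conflicting \epochstart on $q$'s behalf. With that invariant in hand the intersection argument is self-contained and, importantly, does not invoke \autoref{lem:log-stability}, so no circularity is introduced into the chain of lemmas that this result is meant to support.
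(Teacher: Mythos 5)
Your proof takes essentially the same route as the paper's: both reduce the lemma to a quorum-intersection argument on epoch certificates --- two sets of $2f+1$ distinct \epochstart signers among $3f+1$ replicas intersect in at least $f+1$ replicas, hence in at least one non-faulty replica, which pins down a unique \textit{log-slot-num} for the epoch. The paper wraps this in an induction over epochs (base case: every non-faulty replica starts the first epoch at position $0$), but its inductive hypothesis is never actually used in the inductive step, so your direct contradiction argument is equivalent and arguably cleaner.

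Where you go beyond the paper is also the point worth flagging. You correctly identify that the entire argument rests on the invariant that a correct replica never signs two \epochstart messages for the same epoch carrying different \textit{log-slot-num} values; the paper's proof assumes this silently. The concern you raise is real: views are $(e,l)$ pairs, several epoch-switching view changes with increasing leader numbers can all target the same new epoch, and the log merge in each attempt runs over a possibly different set of \viewchange messages, so the protocol as written does not obviously forbid a correct replica from announcing different positions across attempts. Your fix --- caching the first announced \textit{log-slot-num} for epoch $e$ and reusing it in any later \epochstart for $e$ --- does close the hole, but be aware that it is a small protocol amendment (an obligation you are imposing on correct replicas), not a fact derivable from the paper's protocol description; stated as such, your proof is sound, and it makes explicit a subtlety the paper's own proof glosses over.
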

\begin{proof}
Prove by induction.
In the first epoch, all non-faulty replicas start with log position 0.
This proves the base case.
Now assume all non-faulty replicas start epoch $e$ with the same log position.
To enter the next epoch $e'$, a non-faulty replica needs to receive $2f+1$ \epochstart messages for $e'$ from distinct replicas with \textit{log-slot-num} matching its own.
Define these $2f+1$ \epochstart{}s as a epoch certificate.
By quorum intersection, no two non-faulty replicas can have epoch certificates with different \textit{log-slot-num}.
Therefore, all non-faulty replicas enter epoch $e'$ with the same log position.
This proves the inductive step.
\end{proof}

\begin{lem}\label{lem:no-op-req}
Within an epoch, if a \req is committed in some log slot $l$, then no replica can include a \nop with a valid gap certificate in slot $l$ in that epoch.
\end{lem}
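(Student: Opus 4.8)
The plan is to prove Lemma~\ref{lem:no-op-req} by contradiction, exploiting quorum intersection at the \gapprepare layer of the gap agreement protocol. Suppose, toward a contradiction, that within a single epoch some \req $r$ is committed in log slot $l$, yet some replica manages to install a \nop backed by a valid gap certificate in slot $l$. I would first unpack what each of these two facts forces to have happened, and then intersect the two resulting quorums.

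First I would characterize the committed \req. By definition, $r$ committed in slot $l$ means $2f+1$ distinct replicas executed $r$ at slot $l$; at least $f+1$ of these are non-faulty. A non-faulty replica executes a request at slot $l$ only after it delivers (and authenticates) the corresponding \oc and never after it has blocked on a \drop/\gapdropmsg for that slot without seeing a \recv decision. Crucially, a non-faulty replica that has received and verified the \oc for slot $l$ will, when it participates in the gap agreement for slot $l$, only ever broadcast a \gapprepare carrying a \textit{recv} value (it would reply \gaprecvmsg to a \gapfindmsg, and it refuses to prepare \textit{drop}). So at least $f+1$ non-faulty replicas will never send a \gapprepare with a \textit{drop} value for slot $l$ in this epoch.

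Next I would characterize the \nop with a valid gap certificate. By the structure of the agreement protocol (Appendix~\ref{sec:protocol-more:gap}), a valid gap certificate committing a \nop at slot $l$ requires $2f+1$ matching \gapcommit{}s, and each non-faulty \gapcommit for \textit{drop} is itself predicated on the committer having collected $2f$ \gapprepare{}s carrying \textit{drop} (together with its own), i.e. $2f+1$ replicas asserting \textit{drop} at slot $l$. The key step is the quorum-intersection count: a \textit{drop} outcome needs $2f+1$ replicas to prepare \textit{drop}, but we have just shown $f+1$ non-faulty replicas that will only ever prepare \textit{recv} (or nothing) at slot $l$. Since there are only $3f+1$ replicas total, the \textit{drop}-preparing set of size $2f+1$ and the \textit{recv}-only set of size $f+1$ must overlap in at least one replica, and that replica is non-faulty — a contradiction, since a non-faulty replica cannot prepare both \textit{recv} and \textit{drop} for the same slot in the same view. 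This is essentially the argument the authors sketch in the Correctness subsection, and I would reuse that quorum-intersection skeleton verbatim.

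I expect the main obstacle to be handling views carefully: the committed \req and the \nop's gap certificate need not have been produced in the same view, so a purely single-view intersection argument is incomplete. The fix is to invoke the view-change machinery — specifically, that \autoref{lem:epoch-start-log-pos} pins down a common epoch-start position, and that the view change carries forward committed state and gap certificates — to argue that within one epoch the set of replicas that ever prepare \textit{drop} at slot $l$ and the set of non-faulty replicas committed to $r$ at slot $l$ are comparable across the views of that epoch. Concretely, I would show that once $f+1$ non-faulty replicas hold the \oc at slot $l$, this fact (and hence their refusal to ever prepare \textit{drop}) persists through view changes, so no later view can manufacture the $2f+1$ \textit{drop} preparations. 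Pinning down exactly which persistence guarantee of the view-change protocol to cite, and stating it as a clean intermediate claim rather than re-deriving it, is where I would spend most of the care.
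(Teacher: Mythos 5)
Your proposal has a genuine gap, and it sits exactly at the step you call ``crucial.'' You assert that a non-faulty replica that has received and verified the \oc for slot $l$ ``refuses to prepare \textit{drop},'' and the entire quorum-intersection argument rests on this recv-only set of $f+1$ replicas. But that is not a rule of the protocol. In the gap agreement protocol (\autoref{sec:protocol-more:gap}), a replica's \gapprepare does not reflect its local state: it echoes the decision carried in the leader's \gapdecision. If a (possibly faulty) leader presents a \gapdecision containing $2f+1$ valid \gapdropmsg{}s from distinct replicas, a non-faulty replica validates it and broadcasts \gapprepare with \textit{drop} \emph{even if it holds the \oc and has already speculatively executed the request}; the protocol explicitly anticipates this by requiring such a replica to roll back its application state when a \textit{drop} decision commits. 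So your ``recv-only'' set need not be recv-only, and the intersection yields no contradiction. There is also an arithmetic slip layered on top: a \gapcommit is predicated on $2f$ \gapprepare{}s \emph{possibly including the committer's own}, so the drop-preparing set is only guaranteed to contain $2f$ distinct replicas, not $2f+1$; since $2f + (f+1) = 3f+1$ does not exceed the total replica count, the two sets could be disjoint even if your refusal claim were true.

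The paper's proof runs the contradiction in the opposite direction, and that asymmetry is what makes it sound. From the valid gap certificate it extracts the \gapdecision, whose $2f+1$ \gapdropmsg{}s must include at least $f+1$ sent by non-faulty replicas. Those replicas received \drop{}s (hence had not executed $r$) and, by the protocol, block --- ignoring \queryreply{}s --- until the gap agreement resolves. A separate quorum-intersection step on \gapprepare{}s (a non-faulty replica sends a unique \gapprepare per slot per view) shows that \textit{drop} is the only decision that can commit, so these $f+1$ replicas never execute $r$ at all; hence at most $2f$ replicas ever execute $r$, contradicting commitment. The key fact you missed is that receiving a \drop and blocking \emph{does} preclude executing $r$, whereas having speculatively executed $r$ does \emph{not} preclude later preparing \textit{drop}. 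Finally, your concern about cross-view reasoning is not where the difficulty lies: commitment requires matching \textit{view-id}s and non-faulty \gapprepare{}s are unique per view, so the paper keeps this lemma within a view and defers persistence across views to \autoref{lem:commit-view-change}.
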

\begin{proof}
We prove by contradiction.
Assume a replica inserts a \nop with a valid gap certificate in slot $l$.
The replica then has received $2f+1$ \gapcommit{}s with decision \textit{drop} for slot $l$, implying some replica has received $2f$ distinct \gapprepare{}s containing a \textit{drop} decision and a matching \gapdecision from the leader.
Since a non-faulty replica only sends unique \gapprepare for a log slot within a view, by quorum intersection, there cannot exist $2f$ distinct \gapprepare{}s containing a \textit{recv} decision at slot $l$.
Therefore, \textit{drop} is the only possible commit decision for the gap agreement protocol.
Moreover, a valid \gapprepare containing a \textit{drop} decision implies that $2f+1$ replicas have sent a \gapdropmsg.
Out of those $2f+1$ replicas, at least $f+1$ are non-faulty.
Since non-faulty replicas block until they receive \gapcommit and the only possible commit outcome is \textit{drop}, they will not execute \req.
By quorum intersection, no $2f+1$ replicas can execute \req, so \req cannot be committed.
This leads to a contradiction.
\end{proof}

\begin{lem}\label{lem:diff-req}
For any two non-faulty replicas in the same epoch, no slot in their logs contains different \req{}s.
\end{lem}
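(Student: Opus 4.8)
The plan is to reduce the statement to a single uniqueness property: within a fixed epoch, each \net sequence number admits at most one message carrying a valid ordering certificate (\oc). First I would fix the log slot $l$ and the two non-faulty replicas, say holding \req{}s $r$ and $r'$ in slot $l$. A non-faulty replica only writes a \req into slot $l$ after it has delivered and verified an \oc, which by \emph{transferable authentication} binds the enclosed message to a concrete sequence number $s$ stamped by the epoch's sequencer. By \autoref{lem:epoch-start-log-pos} all non-faulty replicas agree on the starting log position of the epoch, and within an epoch messages are appended strictly in sequence-number order, so the slot index $l$ is a deterministic function of the epoch start and $s$. Hence both replicas must hold \oc{}s for messages stamped with the \emph{same} sequence number $s$, and it suffices to show the authentic message for $(\text{epoch}, s)$ is unique, which immediately gives $r = r'$.

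Next I would dispatch the crash-faulty network case. Here a single sequencer is active in the epoch and is assumed non-Byzantine, so it never stamps two distinct messages with the same sequence number; the \oc is just one authenticated \net message, and unforgeability of the in-network authenticator means any valid \oc for $s$ must carry exactly the message the sequencer stamped. Uniqueness is therefore immediate.

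The Byzantine-faulty network case is the one that needs the real work. There the \oc must include $2f+1$ matching \confirm{}s, and I would exploit the receiver rule that a non-faulty replica emits at most one \confirm per sequence number (it ignores subsequent \net messages reusing a sequence number it has already confirmed). Suppose toward contradiction that two distinct messages each gathered $2f+1$ matching confirmations at $s$. Since there are $3f+1$ replicas, the two confirming sets intersect in at least $(2f+1)+(2f+1)-(3f+1)=f+1$ replicas, of which at least one is non-faulty; that replica would have confirmed both messages for $s$, contradicting the single-confirmation rule. So again at most one authentic message exists for $s$, completing the reduction and hence the lemma.

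The main obstacle I anticipate is not the quorum-intersection count, which is routine, but pinning down the slot-to-sequence-number correspondence rigorously: I must argue that the deterministic delivery-in-sequence-number-order rule, combined with the common epoch start from \autoref{lem:epoch-start-log-pos} and the fact that gaps are filled by \nop{}s rather than shifting later entries (as handled in \autoref{lem:no-op-req}), really does make $l$ a well-defined function of $s$ at every non-faulty replica. Once that bookkeeping is nailed down, the uniqueness-per-sequence-number argument carries the rest.
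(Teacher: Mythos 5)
Your reduction is structurally sound, and it works at a genuinely lower level than the paper's proof: you re-derive the per-sequence-number uniqueness of authentic messages from the protocol mechanics (non-equivocating crash-faulty sequencer plus unforgeability in one model; quorum intersection of two $2f+1$ confirmation sets yielding $f+1$ common receivers, hence a non-faulty one that confirms only once, in the other). That half is correct, but note it reproves what the paper's appendix simply assumes -- the safety proof there is written against \net's abstract \emph{ordering} and \emph{transferable authentication} properties, and the quorum-intersection argument you give is the one the paper uses in \cref{sec:design:order} to justify that abstraction. So this buys generality at the implementation level but is not where the lemma's difficulty lies.

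The genuine gap is the part you defer: the claim that the slot index $l$ is a well-defined common function of the epoch start and the sequence number $s$ at every non-faulty replica. You point to \autoref{lem:no-op-req} as handling the fact that gaps are filled by \nop{}s rather than shifting later entries, but that lemma says nothing of the sort -- it only shows a committed \req and a certified \nop cannot coexist in the same slot. The alignment property is not established anywhere else; it \emph{is} the content of \autoref{lem:diff-req}, and the paper proves it by case analysis plus induction on the slot offset within the epoch: (i) with no \drop{}s, strict in-order insertion together with \autoref{lem:epoch-start-log-pos} gives alignment; (ii) a gap filled with a \req (via \queryreply or \gapcommit) carries an \oc whose transferable authentication pins it to the correct position in the \net message sequence, reducing to case (i); (iii) a gap filled with a \nop at offset $i$ preserves alignment because the replica ignores any later delivery of the $i$-th message (by sequence number check), so the $(i+1)$-th message can only land at offset $i+1$; induction on $i$ then closes the argument. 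Case (iii) is precisely what rules out a non-faulty replica's log drifting out of phase with the sequence numbers -- the failure mode your lemma must exclude -- and without it your "uniqueness per sequence number" premise never gets applied to a common slot. As written, the proposal is a correct plan whose central inductive step is asserted, mis-attributed to another lemma, and never carried out.
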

\begin{proof}
Within the epoch, non-faulty replicas insert \req{}s into their logs strictly in the order received from \net.
In the absence of \drop{}s, the \textit{ordering} property of \net ensures that all non-faulty replicas have identical sequence of \req{}s in their logs.
By \autoref{lem:epoch-start-log-pos}, for any epoch, all non-faulty replicas start the epoch with the same log position.
Consequently, for any two non-faulty replicas, no log slot within the epoch contains different \req{}s.
A non-faulty replica may also insert a \req $r$ into its log when handling a \drop.
To fill the gap caused by the \drop, the replica requires the transfer of $r$ with the corresponding ordering certificate \oc (through \queryreply or \gapcommit).
The \textit{transferable authentication} property of \net ensures that $r$ is identical to \req{}s delivered by other non-faulty replicas at the same position in the \net message sequence.
The case is therefore equivalent to the case in which a \req, not a \drop, is received by the replica.
A non-faulty replica may also insert a \nop into its log during the gap agreement protocol.
Assume the replica inserts \nop at the $l+i$th log slot where $l$ is the starting log position of the epoch.
The replica ignores the corresponding $i$th \req (by checking the sequence number) if it is later delivered by \net.
Consequently, if the replica receives the $i+1$th \req in the \net message sequence, it can only insert the \req at log position $l+i+1$. 
By the above argument, the replica's log from $l+i+1$ onward will not contain non-matching \req{}s from other non-faulty replicas.
By induction on $i$, no log slot within the epoch contains different \req{}s at any two non-faulty replicas.
\end{proof}

\begin{lem}\label{lem:commit-view-change}
Any \req or \nop that is committed at a log position in some view will be in the same log slot in all non-faulty replica's log in all subsequent views.
\end{lem}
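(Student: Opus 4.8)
The plan is to prove \autoref{lem:commit-view-change} by induction on the views in their total order, combining a quorum-intersection argument (to guarantee that some surviving log still carries the committed entry into the new view) with a case analysis showing that the log-merge rules of the view change protocol preserve that entry. The two quorum facts I would rely on are: (i) if a \req (resp.\ a \nop) is committed at slot $l$, then at least $f+1$ non-faulty replicas hold that \req (resp.\ the \nop together with its gap certificate) at slot $l$, since commitment requires $2f+1$ matching executions (resp.\ $2f+1$ \gapcommit{}s); and (ii) entering any new view requires the leader to gather $2f+1$ valid \viewchange logs, which are echoed to all replicas inside the \viewstart message, so every non-faulty replica merges from the same $2f+1$ logs.

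For the inductive step, fix the committing view $v$ and let $v' > v$ be the next view in the total order at which some non-faulty replica installs a merged log. Let $Q$ be the set of $\geq f+1$ non-faulty replicas from fact (i). Since the $2f+1$ \viewchange senders and $Q$ are both quorums in a system of $3f+1$ replicas, their intersection contains a non-faulty replica $r^\ast \in Q$. The key observation I would establish is that non-faulty replicas never regress in view number, so $r^\ast$'s current view when it emits its \viewchange for $v'$ is at least $v$; by the base case (current view $=v$, where the committed entry cannot have been displaced within the view by \autoref{lem:diff-req} and \autoref{lem:no-op-req}) or by the induction hypothesis (current view in $(v,v')$), the log $r^\ast$ submits is valid and carries the committed entry at slot $l$. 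Hence at least one of the $2f+1$ merged logs contains it.

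It remains to show the merge preserves slot $l$. For a committed \req I would argue that no valid \viewchange log can disagree at slot $l$: every request in a valid log is backed by an unforgeable \oc whose order is fixed by \net, which extends the reasoning of \autoref{lem:diff-req} from non-faulty replicas to all valid logs; and by \autoref{lem:no-op-req} no valid \nop for slot $l$ exists in that epoch, so the step that copies \nop{}s (overwriting requests) cannot clobber it. Consequently the log with the largest sequence number in the epoch---which necessarily reaches slot $l$---supplies the committed \req. For a committed \nop, its gap certificate travels inside $r^\ast$'s valid log and the \nop-copying step installs it, overwriting any request. When $v'$ advances the epoch, the committed entry lies in an earlier epoch and is carried by the prefix-copying step, with \autoref{lem:epoch-start-log-pos} guaranteeing that the epoch boundary---and hence the slot numbering---is identical across all merged logs. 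I expect the main obstacle to be precisely this epoch-crossing case: I must verify that the valid log chosen to supply the pre-epoch prefix genuinely contains the committed entry, which I would close by tying the \epochstart quorum that fixes the boundary back to the induction hypothesis, so that the agreed prefix necessarily includes every entry committed in earlier epochs.
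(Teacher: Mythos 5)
Your proposal follows essentially the same route as the paper's proof: quorum intersection between the $2f+1$ \viewchange logs echoed in the \viewstart message and the $f+1$ non-faulty replicas holding the committed entry, then a case analysis (\req{} versus \nop) showing the log-merge rules preserve slot $l$, with \autoref{lem:no-op-req} ruling out a \nop{} overwriting a committed \req{} and the gap certificate forcing a committed \nop{} to be installed. You are in fact somewhat more careful than the paper, which argues only a single $v \to v'$ transition and tacitly assumes the largest-seq-num log used for request copying agrees at slot $l$; your explicit induction over the view order, your extension of \autoref{lem:diff-req} to all valid logs via \oc{} unforgeability, and your flagged epoch-crossing prefix case are precisely the details needed to make the paper's sketch fully rigorous.
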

\begin{proof}
To enter a new view $v'$ from the current view $v$, a non-faulty replica needs to receive a \viewstart message which contains $2f+1$ \viewchange messages from distinct replicas.
There are two cases.
Case 1: If a request $r$ is committed at log position $i$ in view $v$, by definition, $r$ is executed by $2f+1$ replicas at the same log position in $v$.
Therefore, at least $f+1$ non-faulty replicas have inserted $r$ into their log at position $i$.
By quorum intersection, at least one of the \viewchange messages contains $r$ in the log.
The log merging rule in the view change protocol ensures that the replica inserts $r$ into its log at the same log position in view $v'$.
And by \autoref{lem:no-op-req}, no \nop can be committed at the same log position, so the replica will not overwrite the slot with a \nop during log merging.
Case 2: If a \nop is committed at log position $i$ in view $v$, at least $2f+1$ replicas have sent \gapcommit with decision \drop for log slot $i$.
Therefore, at least $f+1$ non-faulty replicas have stored $2f$ \gapprepare and the matching \gapdecision with decision \drop.
By quorum intersection, at least one of the \viewchange messages contains the $2f$ \gapprepare and the \gapdecision.
The log merging procedure ensures that slot $i$ is filled with a \nop in view $v'$.
\end{proof}

We are now ready to prove the main log stability lemma.

\begin{proof}[Proof of Log Stability (\autoref{lem:log-stability})]
A successful \req implies that \req is committed at log slot $l$.
By definition of committed requests, at least $f+1$ non-faulty replicas have \textit{matching} logs up to $l$.
And since non-faulty replicas insert log entries strictly in log order (blocking before the next log entry is resolved), all log entries before $l$ are occupied.
For any log slot $l' \le l$, if a \req $r$ is stored in the matching logs, by \autoref{lem:diff-req}, no other \req can be inserted into the same slot at any non-faulty replica.
And by quorum intersection and our gap agreement protocol, there cannot exist a valid gap certificate for slot $l'$.
Therefore, only $r$ can be committed at log slot $l'$ in the view.
Otherwise, if a \nop is stored in the matching logs, there must exist a valid gap certificate for slot $l'$.
By definition, the \nop is committed at $l'$.
\autoref{lem:commit-view-change} then guarantees that log entry at $l'$ (either a \req or a \nop) in the matching log will be in the same log slot in all non-faulty replica's log in all subsequent views.
Consequently, the successful \req was appended onto a stable log at least $f+1$ non-faulty replica.
Since the successful \req is also committed at log slot $l$, by \autoref{lem:commit-view-change}, \req will be in slot $l$ in all non-faulty replica's log in all subsequent views.
The resulting log is thus also stable.
\end{proof}

With \autoref{lem:log-stability}, we are ready to prove our main safety property.

\begin{proof}[Proof of \sysname Safety (\autoref{thm:safety})]
First, observe that, by definition, a stable log only grows monotonically.
Combining this observation with \autoref{lem:log-stability}, from the client's perspective, the behavior of \sysname is indistinguishable from the behavior of a single, correct machine that processes \req{} sequentially.
This implies that any execution of \sysname is equivalent to some serial execution of \req{}s.
Moreover, clients retry sending an operation until a \req containing the operation is successful.
\sysname applies standard at-most-once techniques to avoid executing duplicated \req{}s.
Therefore, a \sysname execution is equivalent to some serial execution of unique client operations.

The above argument proves serializability.
When a client receives the necessary replies for a successful \req $r$, by \autoref{lem:log-stability}, $r$ must have already been added to a stable log.
For any successful \req $r'$ issued after this point in real-time, $r'$ can only be inserted after $r$ in the stable log.
Since non-faulty replicas execute \req{}s strictly in log order, the operations issued and results returned by \sysname are linearizable.
\end{proof}

\subsection{Liveness}

Due to the well-known FLP result, \sysname can not guarantee progress in a fully asynchronous network.
We therefore only prove liveness given some weak synchrony assumptions.

\begin{thm}[Liveness]\label{thm:liveness}
\req{}s sent by clients will eventually be successful if there is sufficient amount of time during which
\begin{itemize}
\setlength\itemsep{-3pt}
    \item the network the replicas communicate over is fair-lossy,
    \item there is some bound on the relative processing speeds of replicas,
    \item the $2f+1$ non-faulty replicas stay up,
    \item there is a non-faulty replica that stays up which no non-faulty replica suspects of having failed,
    \item there is a non-faulty \net sequencer stays up which no non-faulty replica suspects of having failed,
    \item all non-faulty replicas correctly suspect faulty nodes and \net sequencers,
    \item clients' \req{}s are eventually delivered by \net.
\end{itemize}
\end{thm}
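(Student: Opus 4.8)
The plan is to decompose the liveness argument into two phases: first, show that during the synchronous period the sequence of views stabilizes at a \emph{good view}---one whose leader and \net\ sequencer are both non-faulty and which no non-faulty replica suspects of having failed; second, show that in such a good view every client \req\ delivered by \net\ eventually becomes successful. The overall shape mirrors standard leader-based BFT liveness proofs, but it is complicated here by the two-dimensional view number.

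For the first phase, I would argue that view numbers increase monotonically along the lexicographic order of the $\tupstart\textit{epoch-num}\com\textit{leader-num}\tupend$ tuple, so no view is ever revisited, and that the two view-change triggers map cleanly onto the two tuple components: a suspected faulty leader advances the leader field, while a \req\ that \net\ fails to deliver advances the epoch field and replaces the sequencer. Using the bounded relative processing speeds and the fair-lossy network, I would show that any view change initiated during synchrony terminates: the $2f+1$ non-faulty replicas exchange \viewchange\ messages, the new leader gathers $2f$ of them and broadcasts \viewstart, and (for epoch switches) the replicas complete the \epochstart\ round. Since only finitely many faulty participants can occupy the leader and sequencer roles, and since assumption~6 guarantees that non-faulty replicas never falsely suspect a correct leader or sequencer during synchrony, the view sequence must eventually reach the good view guaranteed by assumptions~4 and~5; with conservatively chosen (e.g.\ growing) timeouts, once it is installed no non-faulty replica triggers a further view change, so the view is stable.

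For the second phase, I would take the good view as given. By assumption~7 every client \req\ is eventually delivered by \net; the non-faulty sequencer orders it, and the fair-lossy network together with client retransmission ensures every non-faulty replica eventually receives the \net\ message or a \drop. In the common case they verify the \oc, speculatively execute, and send matching \reply{}s, so the client collects $2f+1$ matching replies and the \req\ is successful. In the drop case, drop detection delivers \drop{}s and the non-faulty leader drives the gap-agreement protocol of \autoref{sec:protocol-more:gap}, collecting either a single \gaprecvmsg\ (recovering the \oc) or $2f+1$ \gapdropmsg{}s and then running the PBFT-style binary agreement to completion during synchrony, committing the \oc\ or a \nop. Either outcome resolves the blocked slot so the log advances; combined with client retries, the \req\ is eventually re-delivered in a slot that commits and hence becomes successful.

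The hardest part will be the first phase: ruling out indefinite view churn across the two-dimensional view space. The subtlety is the interaction between the epoch and leader coordinates---a faulty sequencer forces epoch advances while a faulty leader forces leader advances---so I must argue that both coordinates simultaneously reach a non-faulty participant, and that the timeout discipline both tolerates the bounded but \emph{unknown} message delays of the synchronous period and still detects genuine faults. Establishing that the \viewchange\ and \epochstart\ rounds always terminate under fair-lossy links, and that conservatively growing timeouts eventually exceed the true round-trip bound without being tripped in the good view, is where the real work lies; the progress argument of the second phase is then comparatively mechanical.
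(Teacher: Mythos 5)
Your proposal is correct and follows essentially the same two-phase structure as the paper's proof: the paper likewise first argues that view changes are finite and that any initiated view change (and subsequent \viewstart{} adoption) terminates, so the system stabilizes in a view with a non-faulty, unsuspected leader and \net sequencer, and then shows that within that view every \net-delivered \req{} becomes successful, handling \drop{}s via the \query{}/\gapfindmsg{} gap-agreement path. The one difference is that the paper never needs your growing-timeout machinery for the part you flag as hardest: assumptions 4--6 already provide failure-detector-style accuracy and completeness, so the paper dispatches the view-churn concern immediately by observing that once the good leader and sequencer are in place, the $2f+1$ non-faulty replicas simply will not send the \viewchange{} messages needed to start a higher view.
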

\begin{proof}
Since there exist non-faulty replica and non-faulty \net sequencer that stay up which no non-faulty replica suspects of having failed, there is a finite number of view changes during the synchrony period.
Once the non-faulty replica that stays up has been elected as leader, and the non-faulty sequencer has been configured by the network, no view change with a higher view will start, as $2f+1$ non-faulty replicas will not send the corresponding \viewchange message.

Moreover, any view change that successfully starts will eventually finish.
A non-faulty replica that initiates a view change keeps re-broadcasting its \viewchange message until the new view starts, or until the view change is supplanted by one with a higher view.
Since non-faulty replicas correctly suspect faulty nodes and \net sequencers, eventually $2f+1$ non-faulty replicas will initiate the view change.
As all $2f+1$ non-faulty replicas stay up, the leader for the new view, if it is non-faulty, will eventually receive the necessary $2f+1$ \viewchange messages to start the view.
If the leader is faulty, non-faulty replicas will correctly suspect the fact, and start a higher view change which will supplant the current one.

Additionally, once a view starts, eventually all non-faulty replicas will adopt the new view and start processing \req{}s in the view, as long as the view is not supplanted by a even higher view.
If the leader is non-faulty, it will re-broadcast \viewstart messages until it receives acknowledgement from all replicas.
If the leader is faulty, non-faulty replicas will correctly suspect the fact, and start a higher view change which will supplant the current one.

The above arguments imply that eventually, there will be a view which stays active with a non-faulty leader and a non-faulty \net sequencer.
During that view, non-faulty replicas will eventually be able to resolve any \drop from \net:
The replica receiving a \drop will keep resending the \query message until receiving a \queryreply from the leader or enough \gapcommit{}s.
If the leader does not have the \req, it will continually broadcast \gapfindmsg to all replicas.
Since $2f+1$ non-faulty replicas stay up, eventually the leader will receive either one \gaprecvmsg or $2f+1$ \gapdropmsg{}s.
Once the non-faulty leader starts the binary Byzantine agreement protocol with the decision, by applying the same line of reasoning, eventually non-faulty replicas blocking on the \drop will receive the necessary \gapcommit{}s to resolve the \drop.

Therefore, the system will eventually reach a point where a view stays active with a non-faulty leader and a non-faulty \net sequencer, and non-faulty replicas only receive \req{}s from \net.
After that point, every \req delivered by \net will eventually be successful.
Because clients' \req{}s eventually will be delivered by \net and $2f+1$ non-faulty replicas stay up, clients will eventually receive the necessary \reply{}s for \req{}s they have sent.
\end{proof}

\end{document}